\date{}
\def\blfootnote{\xdef\@thefnmark{}\@footnotetext}
\newcommand*\linenomathpatchAMS[1]{%
	\expandafter\pretocmd\csname #1\endcsname {\linenomathAMS}{}{}%
	\expandafter\pretocmd\csname #1*\endcsname{\linenomathAMS}{}{}%
	\expandafter\apptocmd\csname end#1\endcsname {\endlinenomath}{}{}%
	\expandafter\apptocmd\csname end#1*\endcsname{\endlinenomath}{}{}%
}
\let\linenomathAMS\linenomathWithnumbers
\patchcmd\linenomathAMS{\advance\postdisplaypenalty\linenopenalty}{}{}{}
\let\linenomathAMS\linenomathNonumbers
\newtheorem{proposition}{Proposition}
\newtheorem{lemma}{Lemma}
\newtheorem{corollary}{Corollary}
\newtheorem{assumption}{Assumption}
\title{How would mobility-as-a-service (MaaS) platform survive as an intermediary? From the viewpoint of stability in many-to-many matching}
\author[a,b]{Rui Yao}
\author[a,$\ast$]{Kenan Zhang}
\affil[a]{EPFL, Switzerland}
\affil[b]{Technical University of Denmark, Denmark}
\DeclareMathOperator*{\R}{\mathcal{R}}
\DeclareMathOperator*{\A}{\mathcal{A}}
\DeclareMathOperator*{\N}{\mathcal{N}}
\DeclareMathOperator*{\M}{\mathcal{M}}
\DeclareMathOperator*{\W}{\mathcal{W}}
\DeclareMathOperator*{\setP}{\mathcal{P}}
\DeclareMathOperator*{\calL}{\mathcal{L}}
\begin{document}
\maketitle
\blfootnote{* Corresponding author. E-mail address: \href{mailto:kenzhang@ethz.ch}{kenan.zhang@epfl.ch}.}

\begin{abstract}
    Mobility-as-a-service (MaaS) provides seamless door-to-door trips by integrating different transport modes. Although many MaaS platforms have emerged in recent years, most of them remain at a limited integration level. 
    This study investigates the assignment and pricing problem for a MaaS platform as an intermediary in a multi-modal transportation network, which purchases capacity from service operators and sells multi-modal trips to travelers. 
    The analysis framework of many-to-many stable matching is adopted to decompose the joint design problem and to derive the stability condition such that both operators and travelers are willing to participate in the MaaS system.
    To maximize the flexibility in route choice and remove boundaries between modes, we design an origin-destination pricing scheme for MaaS trips. On the supply side, we propose a wholesale purchase price for service capacity. Accordingly, the assignment problem is reformulated and solved as a bi-level program, where MaaS travelers make multi-modal trips to minimize their travel costs meanwhile interacting with non-MaaS travelers in the multi-modal transport system. We prove that, under the proposed pricing scheme, there always exists a stable outcome to the overall many-to-many matching problem. Further, given an optimal assignment and under some mild conditions, a unique optimal pricing scheme is ensured. 
    Numerical experiments conducted on the extended Sioux Falls network also demonstrate that the proposed MaaS system could create a win-win-win situation---the MaaS platform is profitable and both traveler welfare and transit operator revenues increase from a baseline scenario without MaaS. 
\end{abstract}

{\it Keywords:} mobility-as-a-service (MaaS); many-to-many stable matching; network design; multi-modal traffic assignment

\newpage
\section{Introduction}\label{sec:intro}
\noindent

Mobility-as-a-service (MaaS) is a new emerging concept for urban mobility. The key idea is to serve convenient and seamless door-to-door trips by integrating different transport modes. Accordingly, travelers buy mobility as a service instead of a bundle of travel modes~\citep{kamargianni2016critical,jittrapirom2017mobility}. 
While the integration of shared and on-demand mobility services with the mass transit network has been studied extensively in the literature~\citep[e.g.,][]{wang2016approximating,chen2017connecting,stiglic2018enhancing,pinto2020joint} and also been explored in real practice~\citep[e.g.,][]{shaheen2016mobility,wang2019new,mohiuddin2021planning}, the raise of MaaS brings up a series of open questions, e.g., who should play the role of MaaS platform? what is the relationship between the MaaS platform and mobility service operators? how should MaaS trips be priced and how should the revenue be distributed? could a MaaS system generate extra welfare for travelers and service operators meanwhile being profitable for the platform?

Motivated by the above questions, we study the assignment and pricing problem for a MaaS platform in a multi-model transportation network with both fixed-route transit and on-demand mobility services. Specifically, the MaaS platform is modeled as an intermediary that purchases capacity from the service operators and sells trips to travelers according to their origin-destination (OD) pairs, regardless of the modes or paths. Accordingly, MaaS travelers are provided with maximum flexibility in trip planning, which truly realizes the idea of MaaS (i.e., buying mobility as a single service rather than several means of transportation). In practice, the MaaS platform could recommend several trip plans based on different criteria (e.g., total travel time, number of transfers, etc.). Travelers could take any of them or follow other plans based on their own preferences.   
On the other hand, by wholesaling a fraction of capacity to the MaaS platform, the service operators also spread part of their risks. In other words, the service operators still operate as usual, but they now are guaranteed a certain level of revenue from the MaaS platform. 

This new design of the MaaS platform leads to several challenges. The first one is to assign passenger flows to the multi-modal traffic network, subject to the capacity constraints of mass transit as well as the congestion effect in on-demand mobility services. Different from classical multi-modal settings~\citep[e.g.,][]{di2019unified,luo2021multimodal}, our intermediary MaaS platform setting requires the assignment problem to jointly determine the capacity purchased from each service operator. 
Secondly, the proposed OD-based pricing brings additional challenges because it provides MaaS travelers with flexibility in their route choices. As will be shown later, it leads to complex equilibrium constraints in the assignment problem. 
The third key challenge regards pricing on both the demand and supply sides of the MaaS system. The main task here is to determine the capacity purchase price and the OD-based trip fare to maximize the MaaS platform's profit, meanwhile ensuring the participation of both service operators and travelers. 
To tackle these challenges, we adopt the analysis framework of many-to-many stable matching~\citep{gale1962college,sotomayor1992multiple}. 
Under stable matching outcomes, no other coalition of players can be created that achieves a higher total payoff. Hence, no one has an incentive to deviate from their current matching. To derive stable matching outcomes for the MaaS system, we first solve the multi-modal traffic assignment problem. Based on the optimal assignment, we then derive the stable conditions that retain the participation of travelers and service operators. These conditions, in turn, serve as constraints in the optimal pricing problem.  

Although many MaaS platforms have emerged in real-world (e.g., Transdev, Keolis, Moovel, Moovit), most of them remain at a limited integration level with a combined ticketing/payment system and/or trip planner~\citep{kamargianni2016critical}. While a few exceptions (e.g., Whim, UbiGo) sell mobility packages, this so-called MaaS bundle~\citep{reck2020maas} is still defined on specific services (e.g., one package contains certain numbers of taxi trips and bike-sharing trips). In contrast, we propose to completely remove the boundaries between transport modes through OD-based pricing. To the best of our knowledge, this MaaS pricing scheme has not been studied in the literature. Neither has it been implemented in real practice. 
This study also contributes to the explicit modeling of an intermediary MaaS platform. Different from previous studies~\citep[e.g.,][]{rasulkhani2019route,pantelidis2020many}, we introduce MaaS platform as another stakeholder in the market in addition to travelers and operations. It thus generates more insight into real practice. 
The proposed model further captures the interactions between MaaS and non-MaaS systems, which is missing from the literature~\cite[e.g.,][]{pantelidis2020many,deng2022incentive} but has a great impact on MaaS service planning. 

The rest of this paper is organized as follows. Section 2 briefly reviews related work. Section 3 presents the multi-modal network. The multi-modal traffic assignment and optimal pricing problem for MaaS many-to-many matching are presented in Section 4, along with discussions of solution existence and pricing uniqueness. Section 5 reports the results of the numerical experiment, with additional discussions in Section 6. Section 7 summarizes the main findings and comments on the directions for future research.

\section{Related work}\label{sec:review}
The research on the MaaS assignment and pricing problem is still limited in the literature. The closest ones to this study are \cite{pantelidis2020many} and \cite{liu2023demand}, both of which adopt the concept of many-to-many stable matching. While \cite{pantelidis2020many} only consider fixed-route transit, \cite{liu2023demand} incorporate the on-demand mobility services. In brief, the analysis of many-to-many stable matching consists of two steps~\citep{sotomayor1992multiple}. In the first step, a centralized assignment problem is solved that matches the supply of multiple ``sellers'' with the demand of multiple ``buyers". 
The second step is solving a constrained optimal pricing problem. Specifically, the constraints are given by the stability condition derived from the optimal assignment outcomes obtained in the first step, which guarantee that no pair of buyer and seller has the incentive to deviate from their current assignment and create a new matching. 
Accordingly, the joint assignment and pricing problem is disentangled into subproblems that are connected by the stability condition. 

This study also applies the analytical framework of many-to-many stable matching but differs in several ways. 
First and foremost, we explicitly model the MaaS platform as an intermediary that matches demand with supply and sets prices on both sides. In \cite{pantelidis2020many}, the supply-side decision is whether each service is activated in the MaaS network, and in \cite{liu2023demand}, multiple copies of the binary decision variables are created to represent different capacity levels. Differently, we consider exogenous mobility services and let the MaaS platform decide how much service capacity it purchases from each operator. The platform also needs to propose a capacity purchase price that sustains the participation of operators. Although largely complicating the stable conditions and the pricing problem, this modeling framework aligns much better with real practice. 
The second difference regards the demand-side pricing scheme. Here we consider OD-based pricing, which allows travelers to freely choose modes and paths to reach their destinations. As a result, the assignment problem is no longer fully centralized as in \cite{pantelidis2020many} and \cite{liu2023demand} but involves complex equilibrium constraints that describe the rational route choice of MaaS travelers. 
Thirdly, our model also captures the interactions between MaaS and non-MaaS travelers in the congestible transportation network, instead of assuming non-MaaS are teleported with fixed cost as in \cite{pantelidis2020many} and \cite{liu2023demand}. This prevents us from overestimating the benefit of MaaS and allows us to examine the impact of MaaS on the general public.

Using a stylized model with two service operators and four links, \cite{van2022business} analyze three types of MaaS platforms: 1) integrator, which sets the price for cross-mode trips and takes a share of revenue generated from MaaS trips; 2) broker\footnote{We do not use the term ``platform'' defined in \cite{van2022business} to avoid confusion with the MaaS platform considered in this study.}, which has no control on prices but only takes a shared of revenue of MaaS trips; and 3) intermediary, which first buys trips from the service operators and then sells them to MaaS travelers. It is found that travelers benefit the most from a MaaS integrator because in this case, the platform acts as an additional competitor in the mobility market. 
The MaaS platform modeled in this study is closest to the third case but is different in the way that the capacity purchase price is set by the MaaS platform rather than by the service operators. 
Another recent work also models the MaaS platform as an intermediary and adopts the auction theory to solve the matching problem between operators and travelers~\citep{ding2023mechanism}. In their setting, travelers bid for their trips without specifying the mode or path and the platform matches them to multi-model trips with the objective of maximizing social welfare. The service operators, on the other hand, fully comply with the trip assignment made by the platform. In other words, the supply-side problem is not explicitly considered in \cite{ding2023mechanism}.

This study is also closely related to the multi-modal traffic assignment models~\citep[e.g.,][]{di2019unified,chen2021ridesharing,yao2022general}. Specifically, \cite{deng2022incentive} model the stochastic user equilibrium in a hyperpath-based multi-modal network and solve the optimal pricing/incentive problem on top of that. It further discusses the profit-sharing scheme among service operators. Another array of research tackles the joint multi-modal network design and pricing problem~\citep[e.g.,][]{pinto2020joint,luo2021optimal,luo2021multimodal}. The main focus of this study, however, is to determine the optimal operational strategies for the MaaS platform, given the multi-modal network and total service capacity.

The MaaS platform considered in this study serves as an intermediary between service operators and travelers. Hence, the proposed model is also connected to the supply-chain models~\citep[e.g.,][]{nagurney2002supply,nagurney2006relationship}. Instead of characterizing the equilibria between competitive decision-makers in the supply-chain network, this study focuses on the optimal design of MaaS service under stable matching conditions between operators and travelers.

\section{Multi-modal network}\label{sec:setting}
Consider a multi-modal mobility network $G(\M, \N,\A)$ that is defined by the operator set $\M$, the node set $\N$ and the link set $\A$. 
$G$ consists of a series of subnetworks $G_m$, each of which denotes the service network of an operator $m\in\M$, the sets of origin and destination nodes, denoted by $\N_O$ and $\N_D$, respectively, a set of dummy links $\A_\text{dm}$ that connect origin/destination nodes to the service networks, and a set of transfer links $\A_\text{tr}$ that connect between different service networks. 
An example of the multi-modal mobility network is illustrated in Figure~\ref{fig:MaaS_net}.

\begin{figure}[htb]
    \centering
    \includegraphics[width=0.8\textwidth]{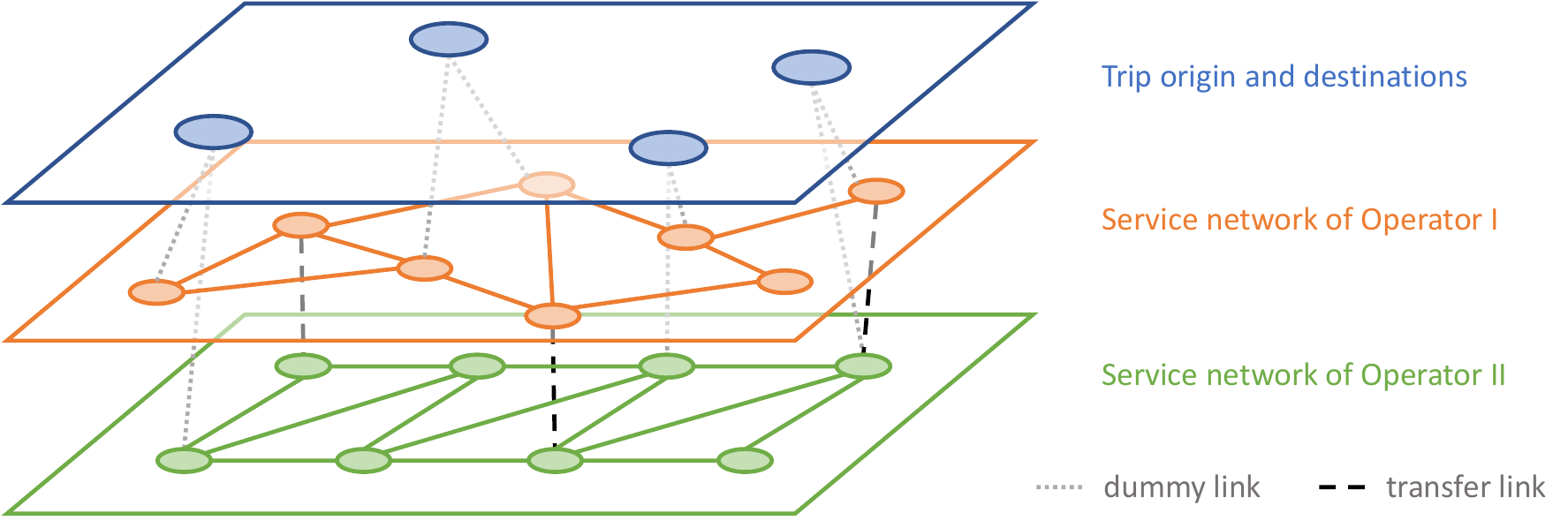}
    \caption{A multi-modal mobility network.}
    \label{fig:MaaS_net}
\end{figure}

For each service network, the links are further divided into regular links, which represent regular trip segments; and access links, which capture the queuing/searching process in a specific type of mobility service. In this study, we consider two categories of mobility services, namely, mass transit (MT), which includes fixed-route bus and railway, and mobility-on-demand service (MoD), such as ride-hailing and bike-sharing. Accordingly, there are six subsets of service links: MT regular links $\A_\text{MT-r}$, MT access links $\A_\text{MT-a}$, MT egress links $\A_\text{MT-e}$, MoD regular links $\A_\text{MoD-r}$, MoD access links $\A_\text{MoD-a}$, and MoD egress links $\A_\text{MoD-e}$. 
Besides service networks, there is also a subnetwork corresponding to private driving, an optional mode for non-MaaS travelers but not for MaaS travelers. The set of links is denoted by $\A_\text{Drv}$.

In what follows, we specify the travel time for each type of link in the multi-modal network.

\subsection{MT links}
The link travel time in mass transit is often assumed to be constant~\citep{de1993transit,LAM2002919,szeto2014transit}, and since we consider exogenous mobility services in this paper, it is safe to assume the MT has fixed headway, which yields a fixed travel time on the access links. Additionally, the travel time on egress links is also assumed to be constant. Accordingly, 
\begin{align}\label{eq:ta_MT}
    t_a = T_a,\quad  a\in{\A}_\text{MT-r}\cup {\A}_\text{MT-a}\cup{\A}_\text{MT-e}.
\end{align}
Note that, we assume explicit capacity constraints on MT regular links (detailed in the next section). Similar to \cite{LAM2002919}, the congestion effects in MT will be captured implicitly by the dual variable associated with the capacity constraints.

\subsection{MoD links}
Consider ride-hailing and bike-sharing as the two typical MoD services. The former is operated on the road network whereas the latter can be considered independent of the road traffic. 
Hence, we define two sets of MoD regular links, denoted by $\A_\text{MoD-r1}$ and $\A_\text{MoD-r2}$, and accordingly divide the MoD operators into two groups, denoted by $\M_\text{MoD-1}$ and $\M_\text{MoD-2}$. The travel time on link $a\in \A_\text{MoD-r1}$ is jointly determined by its own flow, the flow of other ride-hailing services, and the background traffic.
Following the convention, we model it with the Bureau of Public Roads (BPR) function:
\begin{align}
    t_a = T_a \left(1+0.15\left(\frac{\sum_{m\in\M_\text{MoD}}(x_{a,m}+\tilde{x}_{a,m})+\bar{x}_a}{K_a}\right)^4\right),\quad a\in{\A}_\text{MoD-r1},\label{eq:ta_MoD_regular_type1}
\end{align}
where $x_{a,m}, \tilde{x}_{a,m}$ are the link flows of MaaS travelers and non-MaaS travelers, respectively, $,\bar{x}_a=\tilde{x}_{a'},a'\in\A_\text{Drv}$ is the link flow of background traffic corresponding to the same road link, $T_a$ is the free-flow travel time, and $K_a$ is the fixed road capacity. 

On the other hand, the travel time on link $a\in \A_\text{MoD-r2}$ can be simply set to a constant and equal among the same type of MoD services. 
\begin{align}
    t_a = T_a,\quad a\in{\A}_\text{MoD-r2},\label{eq:ta_MoD_regular_type2}
\end{align}

The travel time of an access link refers to the passenger waiting in the MoD service. 
For ride-hailing services, we introduce the following approximation to simplify the analysis:
\begin{align}
    t_{a,m} &= \kappa_m \frac{\sum_{a\in\A_\text{MoD-a}} (x_{a,m} + \tilde{x}_{a,m})}{V_m},\quad a\in{\A}_\text{MoD-a},\;m\in{\M}_\text{MoD-1},\label{eq:ta_MoD_access}\\
    V_m &= K_m - \sum_{a\in\A_\text{MoD-r1}}t_a(x_{a,m} + \tilde{x}_{a,m})
\end{align}
where the parameter $\kappa_m$ summarizes the impact of factors other than the demand-supply relationship (e.g., matching efficiency), and $K_m$ is the fleet size of MoD operator $m$. 

Eq.~\eqref{eq:ta_MoD_access} is derived from the Cobb-Douglas matching function \citep{cobb1928theory} with the scaling factor equal to 1/2 for both demand and supply. Intuitively, it states that the waiting time is uniform over the network and is proportional to the demand-supply ratio. Specifically, the numerator $\sum_{a\in\A^m_\text{MoD-a}} x_{a,m} + \tilde{x}_{a,m}$ gives the total number of travelers that join the MoD service, while the denominator $V_m = K_m - \sum_{a\in\A_\text{MoD-r1}}t_a(x_{a,m} + \tilde{x}_{a,m})$ refers to the idle fleet that are available for matching. 

Since the waiting time in a dockless bike-sharing service shares the same property as ride-hailing~\citep{zheng2021many}, we may use Eq.~\eqref{eq:ta_MoD_access} to estimate its access time as well, while replacing the regular link set with $\A_\text{MoD-r2}$. 

Similar to MT, we assume the MoD egress links have constant travel time, i.e., 
\begin{align}
    t_a = T_a,\quad a\in{\A}_\text{MoD-e}.
\end{align}

\subsection{Dummy and transfer links}
We assume the dummy links induce no travel time, while each mode transfer takes a fixed travel time. Hence, 
\begin{align}
    t_a &= 0,\quad a\in{\A}_\text{dm},\\
    t_a &= T_0,\quad a\in{\A}_\text{tr},
\end{align}
where $T_0$ is the fixed mode transfer time.

\section{MaaS as many-to-many matching}\label{sec:model}
\subsection{System overview}\label{sec:overview}
Consider a MaaS platform that purchases service capacity from the operators and offers multi-modal trips to travelers. To facilitate the analysis, we introduce the following assumptions.
\begin{assumption}\label{assumption}
    Standing assumptions of the MaaS system:
    \begin{enumerate}
        \item The operator's total service capacity and non-MaaS trip fares are fixed and exogenous;
        \item Total travel demand is fixed and exogenous;
        \item MaaS travelers can freely choose their routes to minimize total travel disutility, or equivalently, total travel time under OD-based pricing;
        \item Non-MaaS travelers may take a self-designed multi-modal trip but are subject to a non-negative planning cost for each mode transfer. 
        Besides, they opt for driving when the service capacity in the multi-modal mobility network is not sufficient to support their trips. 
    \end{enumerate}
\end{assumption}

As described in Section~\ref{sec:intro}, the assignment and pricing problem for the MaaS platform can be formulated as a variant of many-to-many stable matching problem~\citep{sotomayor1992multiple}. Specifically, we consider each OD pair as a ``buyer'' and each service operator as a ``seller''. On the one hand, the MaaS platform matches each OD pair to multiple service operators through multi-modal trips. On the other hand, it purchases the service capacity from each service operator to serve travelers between multiple OD pairs. Accordingly, the MaaS platform matches travel demand and service capacity on the multi-modal paths. As per the last point in Assumption~\ref{assumption}, non-MaaS travelers may also take multi-modal trips. Hence, the remaining travel demand and service capacity are matched on the multi-modal paths as well, though neither the assignment nor the fare is decided by the MaaS platform. 

Figure~\ref{fig:matching} illustrates the many-to-many matching for a simple network with one OD pair and three operators (i.e., transit, ride-hailing, and bike-sharing). Travelers can choose among three paths, each of which consists of links from one or more operators. For simplicity, here we do not plot the transfer or access links. Neither do we distinguish MaaS and non-MaaS demands. Yet, they will be considered in the assignment problem formulated in the next subsection. 

\begin{figure}[htb]
    \centering
    \includegraphics[width=0.8\textwidth]{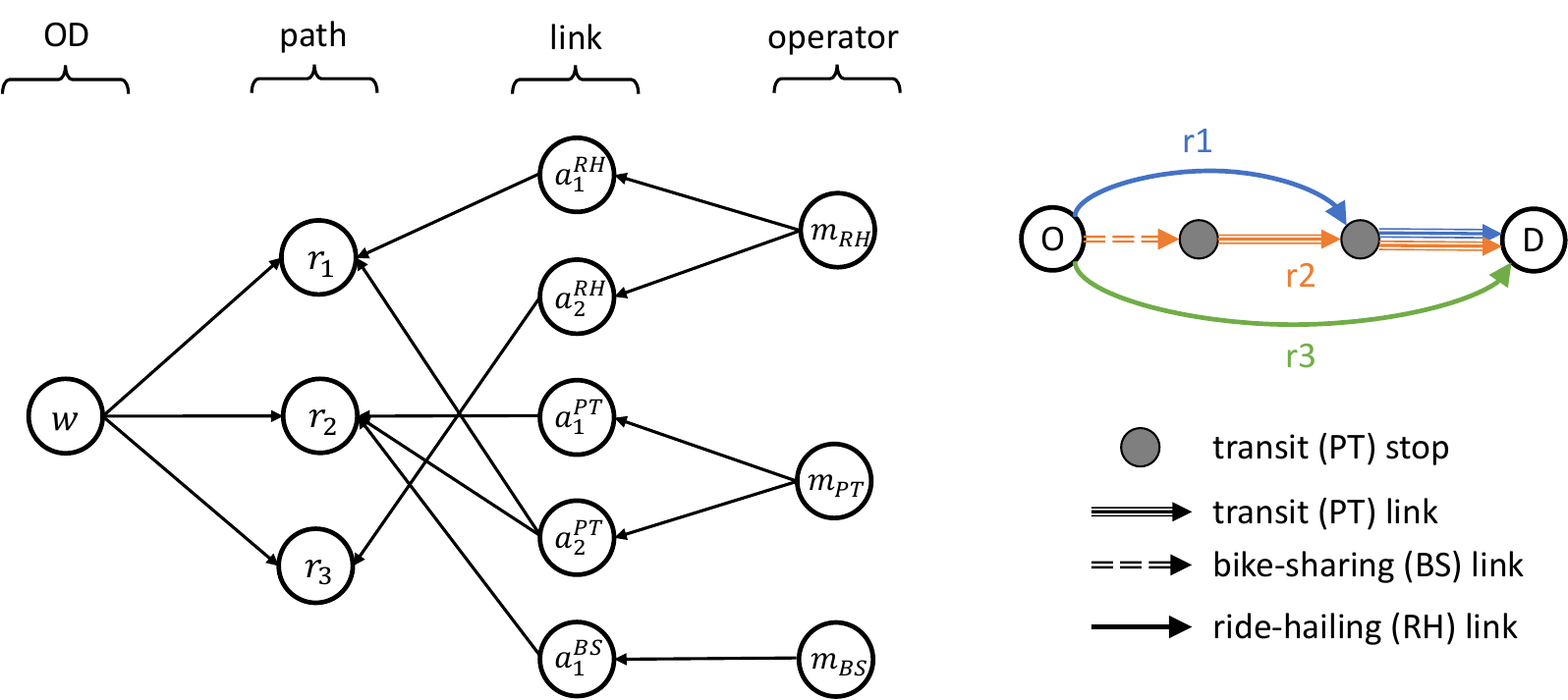}
    \caption{Matching on the multi-modal network.}
    \label{fig:matching}
\end{figure}

\subsection{Assignment problem}\label{sec:assignment}
Let $\W$ denote the set of OD pairs and $Q_w$ be the total travel demand between OD pair $w\in\W$. 
For MT operators, we use $K_a$ to denote the total capacity of each regular link $a\in\A_\text{MT-r}$. 
As for MoD operators, the service capacity on each regular link is the same as link flow, while the total vehicle time, either occupied or vacant, should equal a fixed fleet size $K_m$. 
In this study, we do not explicitly model the vehicle rebalancing. Instead, we introduce a constraint $V_m>\underline{V}_m$ with a minimum vacant vehicle time $\underline{V}_m>0$ to guarantee a certain level-of-service.

In words, the assignment problem is to determine the MaaS demand $q_w$, link capacity purchased $k_a$ and resulting link flows $x_a$ such that the total cost in the multi-modal mobility system is minimized. The total cost consists of two parts: the total travel time of travelers and the total operation cost of mobility services. All other costs, e.g., trip fares and capacity purchase price, are intermediate transfers between stakeholders within the system and thus not counted. Since the mobility services are assumed to be exogenous in this study, their operation cost is constant and thus can be dropped from the objective function. 
To simplify the formulation, we introduce node potential $\pi_{ij}$ to denote the minimum travel cost from node $i$ to node $j$ and define $x^w_a$ and $\tilde{x}^w_a$ as the MaaS and non-MaaS link flows associated with OD pair $w\in\W$, respectively.

In what follows, the accent ``$\sim$'' is added to all non-MaaS variables. Specifically, $\tilde{p}_a,\;a\in\A_\text{MT-r}\cup\A_\text{MoD-r}$ denotes the exogenous link-based fare set by the service operators. To capture the extra burden of multi-modal trip planning for non-MaaS travelers, we further impose a fixed cost $P_0$ to each transfer link, i.e., $\tilde{p}_a=P_0,\;\forall a\in\A_\text{tr}$.

The assignment problem is formulated as follows:
\begin{subequations}\label{eq:assignment}
\begin{align}
    \min_{q,k,x,\pi}\quad & \sum_{a\in\A} t_a(x_a+\tilde{x}_a) \label{eq:obj_assign}\\
    s.t.\quad & q_w + \tilde{q}_w = Q_w, &&\; \forall w\in\W, \label{eq:demand}\\
    & V_m + \sum_{a\in\A_\text{{MoD-r}}} t_a(x_{a,m} + \tilde{x}_{a,m}) = K_m, &&\; \forall m\in{\M}_\text{MoD},\label{eq:MoD_cap}\\
    & V_m \geq \underline{V}_m, &&\; \forall m\in{\M}_\text{MoD},\label{eq:MoD_minvac}\\
    & 0\leq \left(\pi_{js} + t_a - \pi_{is}\right) \perp x^w_a \geq 0, &&\; \forall a:(i,j)\in\A,\; w:(r,s)\in\W, \label{eq:UE_MaaS}\\
    & 0\leq \left(\tilde{\pi}_{js} + t_a + \tilde{p}_a - \tilde{\pi}_{is}\right) \perp \tilde{x}^w_a \geq 0, &&\; \forall a:(i,j)\in\A,\; w:(r,s)\in\W, \label{eq:UE_nonMaaS}\\
    & \sum_{w\in\W} x^w_a = x_a,\; \sum_{w\in\W} \tilde{x}^w_a = \tilde{x}_a, &&\;\forall a\in\A, \\
    & x \in \Omega(q),\; \tilde{x}\in\Omega(\tilde{q}),\label{eq:feasible}\\
    & x_a \leq k_a, \; \tilde{x}_a \leq K_a - k_a &&\;  \forall a\in{\A}_\text{MT-r},\label{eq:link_cap}\\
    & q, \tilde{q}, k, \tilde{k}, \pi, \tilde{\pi}\geq 0.\label{eq:nonneg}
\end{align}
\end{subequations}

Constraint~\eqref{eq:demand} states that the demand between each OD is split into MaaS and non-MaaS travelers.
Constraint~\eqref{eq:MoD_cap} is the conservation of MoD fleet and Constraint~\eqref{eq:MoD_minvac} guarantees their service qualities. Note that Constraint~\eqref{eq:MoD_cap} also implies the MoD service capacity purchased by the MaaS platform matches exactly the demand it serves (i.e., $k_{a,m} = x_{a,m}, \forall a\in\A_\text{{MoD-r}},m\in\M_\text{MoD}$). 

Constraints~\eqref{eq:UE_MaaS} and \eqref{eq:UE_nonMaaS} describe the UE conditions for MaaS and non-MaaS travelers. 
The former is due to the OD-based pricing scheme (i.e., MaaS travelers can freely choose routes between their OD pair), while the latter represents the routing behaviors outside the MaaS system. These two complementary constraints do not exist in the classic stable matching problem~\citep{gale1962college,sotomayor1992multiple}. 
In \cite{pantelidis2020many} and \cite{liu2023demand}, the non-MaaS travelers are simply teleported to their destinations with a fixed cost. Therefore, our formulation is more general and correctly captures the interactions between MaaS and non-MaaS travelers in the multi-modal mobility network.

Finally, $\Omega(q)$ and $\tilde{\Omega}(\tilde{q})$ respectively define the feasible sets of MaaS and non-MaaS OD-indexed link flows that satisfy the flow conservation constraints. For instance, any feasible link flow $x\in\Omega(q)$ satisfies
\begin{align}\label{eq:feasible_set}
    &\sum_{k:(k,i)\in\A} x^w_{ki} - \sum_{j:(i,j)\in\A} x^w_{ij} = \begin{cases}
      -q_w, & i=R(w) \\
      q_w, & i=S(w) \\ 
      0, & \text{otherwise}
    \end{cases}, \;\forall i\in\N,w\in\W,\\
    &x^w_{a}\geq 0, \; \forall w \in \W, a \in \A\setminus{\A}_\text{Drv},
\end{align}
where $R(w)$ and $S(w)$ denote the origin and destination of OD pair $w$, respectively. $\tilde{\Omega}$ is only different in the set of links, which also includes private driving links. 
The MT link capacity constraints are dictated in Eq.~\eqref{eq:link_cap}, which, together with the non-negative constraints~\eqref{eq:nonneg}, imply the conservation of MT total service capacity.

\subsubsection{Reformulation as a bi-level program}\label{sec:bilevel}
The assignment problem \eqref{eq:assignment} is not easy to solve due to the equilibrium constraints \eqref{eq:UE_MaaS} and \eqref{eq:UE_nonMaaS}, especially in real-size networks. Hence, we proposed to reformulate the problem into a bi-level program, where the MaaS demand and capacity planning is solved at the upper level and the lower-level problem turns out to be a traffic assignment problem with capacity constraints. 
In this section, we present the bi-level formulation, along with some analytical results to simplify the original problem, while the detailed solution algorithm is presented Section~\ref{sec:algo}.

We first observe that a profit-maximizing MaaS platform would purchase the exact service capacity as the realized traffic flow. This result is presented in the following lemma. 

\begin{lemma}[Market clearance]\label{lemma:market-clearance}
Suppose the MaaS platform aims to maximize its profit. Then, at the optimal assignment, the purchased service capacity equals the realized MaaS travel flows at optimum, i.e., $k^*_a=x^*_a, \forall a\in\A_\text{MT-r}$.
\end{lemma}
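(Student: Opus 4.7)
The plan is a one-link exchange argument. The bi-level formulation (to be spelled out in Section~\ref{sec:bilevel}) has the platform choosing $(q,k)$ upstairs to maximise a profit of the form
\[
  \Pi(q,k) \;=\; \underbrace{\sum_{w\in\W} p_w\, q_w}_{\text{OD fare revenue}} \;-\; \underbrace{\sum_{a\in\A_\text{MT-r}} \rho_a\, k_a}_{\text{wholesale MT cost}} \;-\; \text{(MoD cost terms)},
\]
with strictly positive per-unit wholesale prices $\rho_a$ on MT capacity, while the lower level produces a UE flow $(x,\tilde x)$ satisfying \eqref{eq:demand}--\eqref{eq:nonneg}. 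I would begin by observing that $k_a$ for $a\in\A_\text{MT-r}$ enters the entire programme only through the split capacity pair in \eqref{eq:link_cap}, namely $x_a\le k_a$ and $\tilde x_a\le K_a - k_a$; it does not appear in any travel-time function, UE complementarity, or flow-conservation constraint.

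Next I would argue by contradiction. Suppose a platform optimum $(q^*,k^*,x^*,\tilde x^*)$ satisfies $k^*_{a_0} > x^*_{a_0}$ for some $a_0\in\A_\text{MT-r}$. Define an alternative upper-level decision $\bar k$ by $\bar k_{a_0} := x^*_{a_0}$ and $\bar k_a := k^*_a$ for $a\neq a_0$, and keep $q^*$ unchanged. To check feasibility, note that under $\bar k$ the MaaS side constraint $x^*_{a_0}\le \bar k_{a_0}$ holds with equality, while the non-MaaS side constraint $\tilde x^*_{a_0}\le K_{a_0}-\bar k_{a_0}$ is only weakly relaxed relative to the original (the right-hand side weakly increases). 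Since no other constraint in \eqref{eq:assignment} depends on $k_{a_0}$, the triple $(q^*,x^*,\tilde x^*)$ remains a feasible lower-level response to $(q^*,\bar k)$.

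Finally I would compare profits. Revenue is unchanged because $q^*$ is unchanged; the MoD cost terms are unchanged because $\bar k$ and $k^*$ agree off of $a_0$ and MoD capacity is determined by flows via \eqref{eq:MoD_cap}; the MT cost strictly drops by $\rho_{a_0}(k^*_{a_0}-x^*_{a_0})>0$. Hence $\Pi(q^*,\bar k)>\Pi(q^*,k^*)$, contradicting optimality of $k^*$. Conclusion: $k^*_a = x^*_a$ on every MT regular link.

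\emph{Main obstacle.} The one delicate point is that relaxing the non-MaaS side of \eqref{eq:link_cap} on link $a_0$ could in principle enable a different lower-level UE (non-MaaS re-routing from driving or MoD onto $a_0$, with congestion feedback onto MaaS through the BPR term \eqref{eq:ta_MoD_regular_type1} and the access-time formula \eqref{eq:ta_MoD_access}). This is why the argument keeps $(x^*,\tilde x^*)$ fixed rather than invoking equilibrium uniqueness: I only need the \emph{existence} of one feasible lower-level response that yields strictly higher profit to refute optimality. Any further re-optimisation at the lower level can only make the platform weakly better off, so the contradiction is robust to how ties in the UE are resolved.
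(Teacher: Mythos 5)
Your proposal is correct and is essentially the paper's own argument: the paper proves the lemma by writing the platform's profit maximization with the single constraint $x_a\le k_a$ and a positive purchase price $p^{s*}_a$ on links carrying MaaS flow, then reading off $k_a=x_a$ from the KKT complementarity conditions --- which is exactly your exchange argument expressed in first-order form. If anything you are slightly more careful than the paper, which simply holds $x_a$ fixed in its abstract problem and never raises the issue you flag about the lower-level equilibrium possibly shifting when $k$ is perturbed.
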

\begin{proof}
    See Appendix \ref{proof:market_clearance} \citep[modified from][]{nagurney2002supply}.
\end{proof}

Accordingly, we can safely drop the decision variable $k_a$ and reformulate the problem as a bi-level program, where the upper-level problem solves the optimal MaaS demand allocation and the lower-level problem solves a multi-modal traffic equilibrium with capacity constraints. The lower-level problem can be further turned into a variational inequality (VI) problem. 
The simplified bi-level formulation is formally stated in Proposition~\ref{prop:bilevel}.

\begin{proposition}[Bi-level formulation of optimal MaaS assignment]\label{prop:bilevel}
    For a profit-driven MaaS platform, the assignment problem ~\eqref{eq:assignment} is equivalent to the following bi-level program:\\
    \noindent \textbf{[Upper-level problem]} 
    \begin{subequations}\label{eq:upper_simp}
    \begin{align}
        \min_{q}\quad & \sum_{a\in\A} t^*_a(x^*_a + \tilde{x}^*_a) \label{eq:upper_obj_simp}\\
        s.t.\quad & 0 \leq q_w \leq Q_w,\; \forall w\in\W,\label{eq:upper_simp_demand_constraint}\\
        & x^*, \tilde{x}^* \in S(q), 
    \end{align}
    \end{subequations}
    where $S(q)$ is the solution set of the lower-level problem.

    \noindent\textbf{[Lower-level problem]} 
    Let $z=[x, \tilde{x}]$ and $F(z) = [t(z), t(z)+\tilde{p}]$. 
    Solving $z^*$ such that
    \begin{subequations}\label{eq:lower_simp_whole}
        \begin{align}\label{eq:lower_simp}
        \langle F(z^*), z-z^*\rangle \geq 0,\; \forall z\in\Gamma(q),
    \end{align}
    where the feasible set $\Gamma(q)$ is defined on the following constraints:
    \begin{align}
        &x\in\Omega(q), \; \tilde{x}\in\tilde{\Omega}(Q-q), \label{eq:lower_simp_flow_feasibility} \\
        &\sum_{a\in\A_\text{MoD-r}} t_a(x_{a,m} + \tilde{x}_{a,m})  \leq K_m - \underline{V}_m,\; \forall m\in{\M}_\text{MoD}, \label{eq:lower_simp_MoD_constraint}\\
        &x_a + \tilde{x}_a \leq K_a,\; \forall a\in{\A}_\text{MT-r}.\label{eq:lower_simp_MT_constraint}
    \end{align} 
    \end{subequations}    
\end{proposition}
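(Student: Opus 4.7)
The plan is to reduce the joint problem \eqref{eq:assignment} in two stages: first eliminate the purchased-capacity variable $k$ using Lemma~\ref{lemma:market-clearance}, then parameterize the residual problem by the MaaS split $q$ and recognize the induced inner problem as the stated VI.

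For the first stage, Lemma~\ref{lemma:market-clearance} gives $k_a^{*}=x_a^{*}$ for every $a\in\A_\text{MT-r}$ at any profit-maximizing outcome. Substituting this into the pair of capacity inequalities in \eqref{eq:link_cap} collapses them into the joint bound $x_a+\tilde{x}_a\leq K_a$, which is exactly \eqref{eq:lower_simp_MT_constraint}. Similarly, eliminating $V_m$ by substituting \eqref{eq:MoD_cap} into \eqref{eq:MoD_minvac} yields \eqref{eq:lower_simp_MoD_constraint}. Constraint \eqref{eq:demand} becomes the box constraint \eqref{eq:upper_simp_demand_constraint} together with the identification $\tilde{q}=Q-q$, which is implicit in the set $\tilde{\Omega}(Q-q)$ used in \eqref{eq:lower_simp_flow_feasibility}. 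After these substitutions only $(q,x,\tilde{x},\pi,\tilde{\pi})$ remain.

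The second stage fixes $q$ and shows that the conditions on $(x,\tilde{x})$ (flow conservation via $\Omega(q),\tilde{\Omega}(Q-q)$, the capacity constraints \eqref{eq:lower_simp_MoD_constraint}--\eqref{eq:lower_simp_MT_constraint}, and the complementarity conditions \eqref{eq:UE_MaaS}--\eqref{eq:UE_nonMaaS}) coincide with the solution set $S(q)$ of the VI \eqref{eq:lower_simp}. The argument here is the classical Smith/Dafermos equivalence adapted to a two-class equilibrium: the node potentials $\pi,\tilde{\pi}$ are precisely the multipliers of the conservation constraints, so \eqref{eq:UE_MaaS}--\eqref{eq:UE_nonMaaS} are the KKT stationarity/complementarity conditions associated with each class's Wardrop principle, which in turn is equivalent to the variational inequality with operator $F(z)=[t(z),\,t(z)+\tilde{p}]$ on the convex set $\Gamma(q)$. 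A key observation is that MaaS travelers face no link-level fare (OD-based prices are constants over any path between the same OD), so the MaaS component of $F$ reduces to $t(z)$, whereas non-MaaS travelers incur both travel time and link fares $\tilde{p}_a$, including the transfer penalty $P_0$ on $\A_\text{tr}$.

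Finally, since the objective \eqref{eq:obj_assign} depends on $q$ only through $(x^{*}(q),\tilde{x}^{*}(q))$, the outer minimization reduces to \eqref{eq:upper_obj_simp} over $q\in[0,Q]$ with $(x^{*},\tilde{x}^{*})\in S(q)$, yielding the bi-level formulation after verifying both inclusions: every feasible tuple of \eqref{eq:assignment} projects onto a feasible pair of the bi-level program with the same cost, and any bi-level solution extends to a feasible tuple of \eqref{eq:assignment} by taking $k_a^{*}=x_a^{*}$ and recovering $(\pi,\tilde{\pi})$ as the VI multipliers. The step I expect to be most delicate is the lower-level equivalence: the MoD access-time expression \eqref{eq:ta_MoD_access} couples flows across all access links of the same operator through $V_m$, so one must verify that $F$ is a well-defined operator on the joint flow space, that $\Gamma(q)$ is convex, and that the implicit tolls arising from the capacity multipliers on \eqref{eq:lower_simp_MoD_constraint}--\eqref{eq:lower_simp_MT_constraint} can be absorbed into the node potentials (as already foreshadowed for MT congestion in Section~3) without altering $F$ itself.
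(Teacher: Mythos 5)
Your proposal is correct and follows essentially the same route as the paper's proof: eliminate $k$ via Lemma~\ref{lemma:market-clearance} to collapse \eqref{eq:link_cap} into \eqref{eq:lower_simp_MT_constraint}, fold \eqref{eq:MoD_cap}--\eqref{eq:MoD_minvac} into \eqref{eq:lower_simp_MoD_constraint}, absorb \eqref{eq:demand} and \eqref{eq:feasible} into the box constraint and $\tilde{\Omega}(Q-q)$, and replace the complementarity conditions \eqref{eq:UE_MaaS}--\eqref{eq:UE_nonMaaS} by the VI solution set $S(q)$. You merely spell out the Wardrop-to-VI equivalence and the two-way inclusion in more detail than the paper, which simply asserts them.
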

\begin{proof}
    Due to Lemma~\ref{lemma:market-clearance}, the MT capacity constraints~\eqref{eq:link_cap} can be equivalently substituted by constraint~\eqref{eq:lower_simp_MT_constraint}. 
    The non-MaaS demand flow $\tilde{q}$ can be dropped by rewriting Eqs.~\eqref{eq:demand} and \eqref{eq:feasible} into Constraints~\eqref{eq:upper_simp_demand_constraint}, \eqref{eq:lower_simp_flow_feasibility}. 
    Then, the MoD capacity constraints~\eqref{eq:MoD_cap} and \eqref{eq:MoD_minvac} can be combined into Constraint~\eqref{eq:lower_simp_MoD_constraint}.
    Lastly, the complementarity constraints~\eqref{eq:UE_MaaS}-\eqref{eq:UE_nonMaaS} are substituted by $S(q)$, which corresponds to the set of solutions to the equivalent VI problem~\eqref{eq:lower_simp}. This completes the proof.
\end{proof}

To further simplify the lower-level problem~\eqref{eq:lower_simp}, we define $x^r_a=\sum_{w:(r,\cdot)}x^w_a$ as the MaaS link flow originated from node $r\in\N_O$ and $\tilde{x}^r_a$ in the same way. Then, let $A=(A^r)_{r\in\N_O}$ be $\N_O$ stacking node-link incidence matrices, where $A^r_{i,a} = \begin{cases}
    -1, & a = (i,\cdot)\\
    1, & a = (\cdot, i) \\
    0, & \text{otherwise}
\end{cases}$, and define vector $b$ with each element $b^r_i = \begin{cases}
    \sum_{w:(r,\cdot)} -q^w, & i = r\\
    q^w, & i = s, w:(r,s)\\
    0 & \text{otherwise}
\end{cases}$ and $\tilde{b}$ is define in the same way with $\tilde{q}$. 
Accordingly, $\Omega(q)$ can be rewritten as  $\Omega(q) = \{x| Ax = b, x\geq 0\}$ and similarly for $\tilde{\Omega}(Q-q)$. 

\begin{proposition}[Existence of optimal MaaS assignment]\label{prop:existence_assignment}
    Suppose the MoD capacities $K_m, \forall m\in\M_\text{MoD}$ are sufficient and the MT capacities $K_a, \forall a\in\A_\text{MT-r}$ are reasonably large such that $\Gamma(Q)$ is nonempty, i.e., there exists a feasible link flow pattern $x$ when MaaS captures all demand in the system.
    Then, there always exists a solution $(q^*, x^*)$ to the assignment problem~\eqref{eq:upper_simp}-\eqref{eq:lower_simp_whole}.
\end{proposition}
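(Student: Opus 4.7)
The plan is to recast the bi--level program~\eqref{eq:upper_simp}--\eqref{eq:lower_simp_whole} as a single optimization over the joint feasible set $\Xi=\{(q,z):q\in[0,Q],\; z\in S(q)\}$, establish that $\Xi$ is nonempty and compact, and then invoke Weierstrass's theorem on the continuous upper--level objective $\sum_{a\in\A} t_a(z)(x_a+\tilde{x}_a)$.

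The core step is to show $S(q)\neq\emptyset$ for every $q\in[0,Q]$ via a Hartman--Stampacchia type VI existence theorem, which requires $F(z)=[t(z),\,t(z)+\tilde{p}]$ continuous and $\Gamma(q)$ nonempty, compact, and convex. Continuity of $F$ is inherited from the closed--form travel--time expressions in Section~\ref{sec:setting}. For nonemptiness, I would start from the hypothesis $\Gamma(Q)\neq\emptyset$: given a feasible flow $x^{(Q)}$ for $q=Q$, scaling its MaaS component by $q/Q$ and routing the residual demand $Q-q$ as non--MaaS flow through the driving subnetwork $\A_\text{Drv}$ (which carries no hard capacity) yields a feasible point of $\Gamma(q)$. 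Boundedness then follows from finite total demand combined with the MT and MoD capacity constraints~\eqref{eq:lower_simp_MoD_constraint}--\eqref{eq:lower_simp_MT_constraint}, and closedness from continuity of all constraint functions.

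The hard part will be convexity of $\Gamma(q)$. The flow--conservation and MT capacity constraints are affine, but the MoD constraint~\eqref{eq:lower_simp_MoD_constraint} contains the ride--hailing term $T_a(x_{a,m}+\tilde{x}_{a,m})\bigl(1+0.15(\bar{y}_a/K_a)^4\bigr)$, whose Hessian with respect to the disaggregated operator flows is easily checked to fail positive semidefiniteness when $\bar{y}_a>0$. I would address this either by strengthening the standing assumption so that~\eqref{eq:lower_simp_MoD_constraint} is strictly slack at the constructed feasible point, in which case the binding portion of $\Gamma(q)$ is locally polyhedral and hence convex, or by appealing to a non--convex VI existence result such as Karamardian's theorem, whose hypotheses reduce to compactness of $\Gamma(q)$ and continuity of $F$.

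With $S(q)\neq\emptyset$ secured, standard upper--hemicontinuity arguments show that the graph of $S$ is closed in $[0,Q]\times Z$, where $Z$ is a compact box containing all admissible flows; this follows from continuity of $F$ together with continuity of the correspondence $q\mapsto\Gamma(q)$ implied by continuity of the constraint functions and a Slater--type qualification at the feasible point constructed above. Therefore $\Xi$ is a closed subset of a compact set and hence compact, the upper--level objective is continuous on $\Xi$, and Weierstrass's theorem yields an optimal pair $(q^*,x^*)$, completing the proof.
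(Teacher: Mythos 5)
Your overall architecture matches the paper's: establish that $\Gamma(q)$ is nonempty, convex, and compact, that $F$ is continuous, and that the solution map $S(q)$ behaves well enough in $q$ to extract an optimal pair by a compactness argument. The paper packages the final step by invoking Corollary~3 of \cite{harker1988existence} rather than carrying out the closed-graph/Weierstrass argument by hand, and its feasibility construction keeps $x_a+\tilde{x}_a=x_{0,a}$ on MT links using $\Omega(\tilde q)\subseteq\tilde\Omega(\tilde q)$ rather than rerouting the residual demand onto $\A_\text{Drv}$; both constructions are fine.

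The genuine gap is your treatment of the MoD constraint~\eqref{eq:lower_simp_MoD_constraint}. You correctly observe that it is nonconvex in the flows, but both of your proposed fixes fail. Karamardian/Hartman--Stampacchia-type existence results for the VI $\langle F(z^*),z-z^*\rangle\geq 0$ on $\Gamma(q)$ still require $\Gamma(q)$ to be convex, not merely compact; on a nonconvex compact set a continuous $F$ need not admit a VI solution (a tangential vector field on a circle is a standard counterexample). Likewise, strict slackness of the constraint at one constructed feasible point only tells you the set is locally polyhedral near that point; it does not make $\Gamma(q)$ globally convex, which is what the VI existence theorem needs. What you missed is that the proposition's hypothesis that the MoD capacities $K_m$ are ``sufficient'' is there precisely to dispose of this constraint: since total demand is fixed, all link flows are bounded by $\sum_w Q_w$, so for $K_m$ large enough (the paper gives the explicit bound $K_m=\sum_{a\in\A_\text{MoD-rj}}t_a(\sum_w Q_w)\sum_w Q_w+\underline{V}_m$) Constraint~\eqref{eq:lower_simp_MoD_constraint} holds at every feasible flow and can be deleted from the description of $\Gamma(q)$. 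The remaining constraints --- flow conservation, nonnegativity, and the linear MT capacity bound~\eqref{eq:lower_simp_MT_constraint} --- are all affine, so $\Gamma(q)$ is a polytope and convexity is immediate. With that correction, the rest of your argument goes through.
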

\begin{proof}
    When the MoD capacities are sufficient, Constraint~\eqref{eq:lower_simp_MoD_constraint} always holds and thus can be dropped safely. 
    For any MaaS demand $q$, it is easy to show $\Omega(q)$ and $\tilde{\Omega}(Q-q)$ are nonempty, convex, and compact. To further show $\Gamma(q)$ is indeed nonempty, convex, and compact, we recall that $\Gamma(Q)$ is non-empty due to the MT capacity assumption. Therefore, starting from a feasible solution in $x_0 \in\Gamma(Q)$, for any MaaS demand $q$, we can construct a feasible link flow pattern $(x,\tilde{x})$ such that $x\in\Omega(q),\tilde{x}\in\Omega(\tilde{q})$ and $x_a+\tilde{x}_a = x_{0,a},\forall a\in\A_\text{MT-r}$. Since $\tilde{\Omega}$ is defined on a larger set of links that include $\A_\text{Drv}$, we have $\Omega(\tilde{q})\subseteq \tilde{\Omega}(\tilde{q})$ for any non-MaaS demand $\tilde{q}$. It thus ensures that $\tilde{x}\in\tilde{\Omega}(Q-q)$. Accordingly, we conclude $\Gamma(q)$ is nonempty, convex, and compact for any $q$. 
    
    Since the half-space defined by Constraint~\eqref{eq:lower_simp_MT_constraint} is defined on the total link flow, it is affected by total demand $Q$ but not $q$. On the other hand,  $\Omega(q)$ and $\tilde{\Omega}(Q-q)$ change continuously with $q$. Therefore, $\Gamma(q)$ is a continuous point-to-set map of $q$. 

    We then prove the existence of optimal assignment by evoking Corollary 3 in \cite{harker1988existence}. Besides the property of $\Gamma(q)$ stated above, it requires three more conditions: 1) the upper-level objective function is continuous in both $q$ and $x^*$; 2) the lower-level cost function $F(z)$ is continuous; and 3) the feasible set of $q$ is nonempty and compact. All of them are trivially satisfied and thus the proof is complete.
\end{proof}
We note that the assumptions in Proposition~\ref{prop:existence_assignment} are reasonable. First, one primary goal of introducing MaaS is to increase public transport ridership. Hence, both MoD and MT should have sufficient capacity to serve demand switching from private driving. Besides, as the total demand is fixed, the link flow is bounded, i.e., $x_a\leq \sum_w Q_w,\forall a\in\A_\text{MoD-r}$. Therefore, one sufficiently large fleet size can be set as $K_m = \sum_{a\in\A_\text{MoD-rj}} t_a(\sum_w Q_w)\sum_w Q_w  + \underline{V}_m$.

\subsubsection{Solution algorithm}\label{sec:algo}

In this study, we adopt a gradient-based algorithm to solve the assignment problem \eqref{eq:upper_simp}-\eqref{eq:lower_simp_whole}. Let $\calL(q)$ denote the objective function \eqref{eq:upper_obj_simp}, then the gradient is given by 
\begin{align}\label{eq:assignment_obj_grad}
     \nabla \calL(q) &= \sum_{a\in\A} \left[ (x_a^*+\tilde{x}^*_a)\frac{\partial t^*_a}{\partial x_a} + t^*_a\right]\nabla_q x^*_a + \sum_{a\in\A} \left[ (x_a^*+\tilde{x}^*_a)\frac{\partial t^*_a}{\partial \tilde{x}_a} + t^*_a\right]\nabla_q \tilde{x}^*_a \nonumber\\
     &= \sum_{a\in\A} \left[ (x_a^*+\tilde{x}^*_a)(t^*_a)' + t^*_a\right](\nabla_q x^*_a + \nabla_q \tilde{x}^*_a)
\end{align}
where $\nabla_q x^*_a$ and $\nabla_q \tilde{x}^*_a$ are the gradients of $x^*_a$ and $\tilde{x}^*_a$ with respect to $q$, respectively, also known as the equilibrium sensitivities \citep{tobin1988sensitivity,patriksson2004sensitivity}. The second equality is due to the fact that all link travel times can be seen as a function of total link flow $x_a+\tilde{x}_a$ and thus $\partial t_a/\partial x_a = \partial t_a/\partial \tilde{x}_a = t'_a$.

To evaluate the gradient Eq.~\eqref{eq:assignment_obj_grad}, we first consider a relaxed lower-level problem 
\begin{align}\label{eq:VI_relax}
    \langle \hat{F}(z^*), z-z^*\rangle \geq 0,\; \forall z\in\Phi(q). 
\end{align}
Here, the feasible set $\Phi(q) = \{z| \Lambda z = d, z\geq 0\}$, where $\Lambda = [A, A]$ and $d=[b, \tilde{b}]$, only contains affine function and non-negative constraint. The MT capacity constraint \eqref{eq:lower_simp_MT_constraint} is relaxed and added to the cost function
\begin{align}\label{eq:augment_link_cost}
    \hat{t}_a = t_a + [\mu_a + \rho(x_a + \tilde{x}_a - K_a)]_+,\;\forall a \in{\A}_\text{MT-r},
\end{align}
where $\mu_a$ is the Lagrangian multiplier and $\rho$ is a positive penalty factor. They can be seen as constants for a single lower-level problem, though they are updated iteratively in the main loop. The augmented link cost \eqref{eq:augment_link_cost} is motivated by the augmented Lagrangian method also used \citep{nie2004models,kanzow2016multiplier} to tackle the side constraints in traffic assignment problems.
Accordingly, we have $\hat{F}(z) = [\hat{t}(z), \hat{t}(z) + \tilde{p}]$. The MoD capacity constraint \eqref{eq:lower_simp_MoD_constraint}, on the other hand, is naturally satisfied because a violation of such constraint would lead to an infinite access time. In other words, the demand flow for MoD service would not keep growing because the decreasing vacant vehicle time $V_m$ deteriorates the service quality and makes the service less attractive to travelers. 

It is a well known result \citep[e.g.,][]{facchinei2003finite} that the VI problem \eqref{eq:VI_relax} is equivalent to a fixed point
\begin{align}\label{eq:fix_point}
    z^* = \Pi_{\Phi(q)}(z - \gamma \hat{F}(z^*)), 
\end{align}
for any step size $\gamma>0$, where $\Pi$ is the projection operator. Hence, we may solve the lower-level traffic equilibrium through fixed-point iterations and differentiate the result through backward propagation~\citep{li2022differentiable,liu2023end}. Yet, evaluating and differentiating the projection operation $\Pi_{\Phi(q)}$ is still challenging. To tackle this issue, we apply the operator splitting method following \cite{davis2017three}. Let $\Phi_1 = \{z|z\geq 0\}$ and $\Phi_2 = \{z|\Lambda z = d\}$, then the projections onto $\Phi_1$ and $\Phi_2$ have the following closed-form solutions:
\begin{align}
    \Pi_{\Phi_1}(z) &= [z]_+,\\
    \Pi_{\Phi_2}(z) &= z - U\Sigma^{-1}V^\top (\Lambda z - d),
\end{align}
where $U\Sigma V$ is the compact singular value decomposition of the adjacency matrix $\Lambda$. 

Accordingly, we solve Eq.\eqref{eq:VI_relax} through the following fixed-point iterations with auxiliary variables $u$ and $v$:
\begin{subequations}\label{eq:fix_point_iter}
    \begin{align}
        z^{(n+1)} &= \Pi_{\Phi_1}(u^{(n)}),\\
        v^{(n+1)} &= \Pi_{\Phi_2}\left(2z^{(n+1)} - u^{(n)} - \gamma \hat{F}(z^{(n+1)})\right),\\
        u^{(n+1)} &= u^{(n)} + \left(v^{(n+1)} - z^{(n+1)}\right)
    \end{align}
\end{subequations}

\begin{algorithm}[H]
\caption{Solution algorithm for MaaS assignment problem}\label{alg:assignment}
\begin{algorithmic}
    \State \textbf{Parameter:} step size $\alpha,\gamma$, momentum parameter $\beta$, initial penalty factor $\rho$, penalty scaler $\phi$, penalty threshold parameter $\sigma$, gap thresholds $\varepsilon_q,\varepsilon_z$
    \State \textbf{Input:} initial MaaS demand $q_0$, initial link flows $x_0, \tilde{x}_0$, maximum iteration for lower-level problem $N$
    \State \textbf{Initialization:} $q^{(0)} = q_0$, $z^{*(0)}=(x_0, \tilde{x}_0)$, $\omega^{(0)}=0$, $\mu^{(0)}=0$, $\rho^{(0)} = \rho$
    \For{$k=0,1,2,\dots$}
        \State Run $N$ fixed-point iterations \eqref{eq:fix_point_iter} with $u^{(0)} = z^{(k)}$.
        \State Set $z^{(k+1)}= \Pi_{\Phi_1}(u^{(N)}) = [x^{(k+1)}, \tilde{x}^{(k+1)}]$.
        \State Compute Jacobian matrix $J_q(z^{(k+1)})$. Accordingly, $\nabla_q x^{(k+1)}_a$ is the sum over rows in $J_q(z^{(k+1)})$ that corresponds to MaaS link flow $x^r_a,\forall r\in\N_O$, and $\nabla_q x^{(k+1)}_a$ is obtained in the same way. 
        \State Compute gradient $\nabla \calL(q^{(k)})$ by Eq.~\eqref{eq:assignment_obj_grad}.
        \State Update MaaS demand 
        \begin{align}
            \omega^{(k+1)} &= \beta \omega^{(k)} - \alpha \nabla \calL(q^{(k)})\\
            q^{(k+1)} &= \min \left\{Q, \left[q^{(k)} - \alpha \nabla \calL(q^{(k)}) + \beta \omega^{(k+1)} \right]_+ \right\}
        \end{align}
        \State Update Lagrangian multipliers: $\forall a\in\A_\text{MT-r}$,
        \begin{subequations}\label{eq:update_mu}
            \begin{align}
                \mu_a^{(k+1)} &= \left[\mu_a^{(k)} + \rho^{(k)}\left(x_a^{(k+1)} + \tilde{x}_a^{(k+1)} - K_a \right) \right]_+\\
                \rho^{(k+1)} &= \begin{cases}
                    \phi +  \rho^{(k)}, \; |x_a^{(k+1)} + \tilde{x}_a^{(k+1)} - K_a| \geq \sigma |x_a^{(k)} + \tilde{x}_a^{(k)} - K_a|\\
                    \rho^{(k)},\;\text{otherwise}
                \end{cases}
            \end{align}
        \end{subequations}
        
        \State Compute gaps
        \begin{align}
            g_q^{(k+1)} = \frac{||q^{(k+1)} - q^{(k)}||}{||q^{(k)}||}, \quad 
            g_z^{(k+1)} = \frac{||z^{(k+1)} - z^{(k)}||}{||z^{(k)}||}
        \end{align}
        \If{$g_q^{(k+1)}<\varepsilon_q$, $g_z^{(k+1)}<\varepsilon_z$, and capacity constraints are satisfied}
            \State \textbf{break}
        \EndIf
        \State Update $N$
    \EndFor
\end{algorithmic}
\end{algorithm}

The convergence of the fixed-point iterations is proved in the following proposition:

\begin{proposition}[Convergence of fixed-point iterations with operator splitting]\label{prop:lower_level_convergence} 
    Then, the fixed-point iterations \eqref{eq:fix_point_iter} converges linearly to the solution to the relaxed VI problem \eqref{eq:VI_relax} with a step size $\gamma\in(0, 2/(L+\rho))$, where $L$ is the Lipschitz constant of link travel time function $t(\cdot)$. 
\end{proposition}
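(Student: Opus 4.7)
The plan is to recast \eqref{eq:VI_relax} as a monotone inclusion, recognize \eqref{eq:fix_point_iter} as an instance of the Davis--Yin three-operator splitting scheme \citep{davis2017three}, and then invoke that scheme's linear convergence rate under a cocoercivity hypothesis on the forward term. The Lipschitz bound $L+\rho$ appearing in the denominator of the admissible step-size range will come directly from summing the Lipschitz moduli of the bare travel cost and of the augmented penalty correction.

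The first step is the reformulation. Under a standard constraint qualification (guaranteed here by polyhedrality of $\Phi_{1}$ and $\Phi_{2}$), $z^{*}$ solves \eqref{eq:VI_relax} if and only if
\begin{equation*}
0 \in \hat{F}(z^{*}) + N_{\Phi_{1}}(z^{*}) + N_{\Phi_{2}}(z^{*}),
\end{equation*}
since $\Phi(q)=\Phi_{1}\cap\Phi_{2}$ and the normal cones are additive on their intersection. Both normal-cone maps are maximal monotone, and their resolvents at any parameter $\gamma>0$ coincide with the projections $\Pi_{\Phi_{1}}$ and $\Pi_{\Phi_{2}}$ already used in \eqref{eq:fix_point_iter}. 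An algebraic pattern match, with the auxiliary variables $(u,v)$ identified with the natural reflected-resolvent variables of Davis--Yin, shows that \eqref{eq:fix_point_iter} is precisely the Davis--Yin update applied to the inclusion above with $\hat{F}$ as the forward (single-valued) operator and step size $\gamma$.

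The second step is to estimate the moduli of $\hat{F}(z)=[\hat{t}(z),\hat{t}(z)+\tilde{p}]$. By assumption the bare travel time vector $t(\cdot)$ is $L$-Lipschitz and monotone (standard for the BPR term, the access-time function, and the constant blocks). The penalty correction $[\mu_{a}+\rho(x_{a}+\tilde{x}_{a}-K_{a})]_{+}$ is the positive part of an affine map of slope $\rho$, hence $\rho$-Lipschitz and monotone by construction. Summation therefore yields $\hat{F}$ monotone and $(L+\rho)$-Lipschitz; combined with the gradient structure of the separable part, the Baillon--Haddad theorem supplies cocoercivity with modulus $1/(L+\rho)$. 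The Davis--Yin convergence theorem \citep{davis2017three} then delivers convergence of $\{z^{(n)}\}$ to a solution of \eqref{eq:VI_relax} for every $\gamma\in(0,2/(L+\rho))$, with a linear rate following from the strong monotonicity contributed by the strictly convex BPR term on congested links together with the $\rho$-strong monotonicity induced by the penalty on links where the capacity constraint is active.

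The main obstacle will be verifying cocoercivity rather than merely Lipschitz monotonicity, because the MoD access time \eqref{eq:ta_MoD_access} couples the flows on all access links of a given operator through the common vacant-fleet quantity $V_{m}$, so $\hat{F}$ is not the gradient of a separable potential. I would handle this by exploiting the enforced lower bound $V_{m}\geq\underline{V}_{m}>0$: under this bound the access-time Jacobian is a bounded rank-one perturbation of a positive diagonal, which allows an explicit bound on its symmetric part and confirms the cocoercivity modulus claimed above. Once that technical point is settled, the remainder of the argument is a direct citation of the Davis--Yin rate estimate and yields exactly the conclusion stated in the proposition.
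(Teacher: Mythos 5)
Your proposal follows essentially the same route as the paper: both identify \eqref{eq:fix_point_iter} as a Davis--Yin three-operator splitting applied to the inclusion $0\in\hat{F}(z^{*})+N_{\Phi_{1}}(z^{*})+N_{\Phi_{2}}(z^{*})$, derive the Lipschitz modulus $L+\rho$ of $\hat{F}$, and defer the rate estimate to \cite{davis2017three}. You are in fact more careful than the paper on the one delicate point: the paper's proof only asserts that $\hat{F}$ is $(L+\rho)$-Lipschitz and maximal monotone before citing Davis--Yin, whereas that scheme requires cocoercivity of the forward operator, which (as you correctly flag) does not follow from Lipschitz monotonicity alone since the access-time coupling through $V_{m}$ prevents $\hat{F}$ from being the gradient of a separable potential---your rank-one-perturbation argument under the bound $V_{m}\geq\underline{V}_{m}>0$ is precisely the verification the paper omits.
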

\begin{proof}
    We first note that $t(\cdot)$ must be Lipschitz continuous because link flows are bounded by the fixed travel demand. 
    Correspondingly, the Lipschitz constant of $\hat{F}$ is $L+\rho$, with $\rho$ being the augmented Lagrangian penalty.
    Furthermore, $\hat{F}(z)$ is maximal monotone as it is monotone and continuous with $z$~\citep{auslender2003maximal}. 
    The remaining proof can be found in \cite{davis2017three}.
\end{proof}

Algorithm~\ref{alg:assignment} summarizes the solution algorithm for the MaaS assignment problem. In brief, in each iteration, we first solve the approximate traffic equilibrium given the current MaaS demand and compute its Jacobian matrix. Then, we update the MaaS demand via Nesterov accelerated gradient descent \citep{sutskever2013importance} and the Lagrangian multipliers \citep{parikh2014proximal}. As it approaches the optimal solution, we gradually reduce the number of fixed-point iterations by 1, which is numerically shown to stabilize the outcomes. 
The iterations terminate when both demand gap $g_q$ and flow gap $g_z$ are sufficiently small.

\subsection{Pricing problem}\label{sec:pricing}
The pricing problem presented in this section aims to determine the capacity purchase price, denoted by $p^s_a,\;\forall a\in\A_\text{MT}\cup\A_\text{MoD}$, and the trip fare between each OD pair, denoted by $p^d_w,\;w\in\W$, such that the matching, or equivalently the flow assignment, solved in Section~\ref{sec:assignment} satisfies the stable condition. 
To this end, we first derive the stability condition in Section~\ref{sec:stability} and then present the optimal pricing problem in Section~\ref{sec:optimal_pricing}. As per classic results of many-to-many stable matching, the pricing problem is reduced to a linear program whose feasible set is described by the stability condition. Yet, the stability condition is rather difficult to obtain in this study due to the complex assignment. These issues are discussed in detail below, along with the proposed solution approaches.

\subsubsection{Stability condition}\label{sec:stability}
The stability condition states that no travelers and operators can form a new matching that yields a higher total payoff. In our setting, it means no travelers and operators that join the MaaS system can achieve a higher total payoff by shifting to a path that is not used in the optimal assignment of MaaS trips. 
Let $u_{w,l}$ denote the payoff of travelers taking path $l$ between OD $w$ and $v_{m,l}$ be the corresponding payoff of operator $m$. Then, the stability condition is given by 
\begin{align}
    u_{w,l} + \sum_{m\in\M} v_{m,l} &\geq u_{w,l'} + \sum_{m\in\M} v_{m,l'},\; \forall l\in{\setP}^*_w,\; l'\in{\setP}_w\setminus{\setP}^*_w ,\; w\in\W, \label{eq:stable_cond_MaaS}\\
    u_{w,l} + \sum_{m\in\M} v_{m,l} &\geq \tilde{u}_{w,l'} + \sum_{m\in\M} \tilde{v}_{m,l'},\; \forall l\in{\setP}^*_w,\; l'\in{\setP}_w,\; w\in\W, \label{eq:stable_cond_nonMaaS},
\end{align}
where $\setP^*_w$ denote the set of paths taken by MaaS travelers under the optimal assignment solved in Section~\ref{sec:assignment}. 
Eq.~\eqref{eq:stable_cond_MaaS} states that all MaaS trips on matched MaaS paths yield higher total payoffs than MaaS trips on \textit{non-matched} MaaS paths. In other words, travelers and operators have no incentives to still remain in the MaaS system but match on a path that is not used by any other MaaS travelers. 
On the other hand, Eq.~\eqref{eq:stable_cond_nonMaaS} requires all matched MaaS paths to have higher total payoff than all non-MaaS paths. Hence, travelers and operators have no incentive to leave the MaaS system.

Let $U_w$ be the utility value of trips between OD pair $w\in\W$. With a mild ablation of notation, we define $\mu^*_a$ as the dual variable associated with each link at the optimal assignment. Except for the capacitated MT regular links, the value is always zero. And for MT regular links, $\mu^*_a$ can be obtained approximately as the value of $\mu_a^{(k)}$ in Algorithm~\ref{alg:assignment} at convergence.

Then, the net utility for MaaS and non-MaaS travelers choosing $l\in\setP_w$ are given by 
\begin{align}
    u_{w,l} &= U_w - \sum_{a\in l}(t^*_a+\mu^*_a) - p^d_w, \label{eq:MaaS_traveler_payoff}\\
    \tilde{u}_{w,l} & = U_w - \sum_{a\in l}(t^*_a + \tilde{p}_a + \mu^*_a). \label{eq:nonMaaS_traveler_payoff}
\end{align}

Recall that the total service capacity is assumed to be fixed and exogenous. Hence, it is sufficient to use the revenue to define the operator's payoff.
For each operator $m\in\M$, the marginal revenue generated on each path $l$ is given by 
\begin{align}
    v_{m,l} &= \sum_{a\in l\cap\A^m}p^s_a , \label{eq:MaaS_operator_payoff}\\ 
    \tilde{v}_{m,l} &= \sum_{a\in l\cap\A^m} \tilde{p}_a . \label{eq:nonMaaS_operator_payoff}
\end{align}

Proposition~\ref{prop:stable_cond_simp} gives the simplified stability condition based on the optimal assignment derived in Section~\ref{sec:assignment}, while a supporting lemma is first presented below. 

\begin{lemma}[Generalized cost at optimal MaaS assignment]\label{lemma:MaaS_UE_cost}
    Under the optimal MaaS assignment, all active MaaS paths $l\in{\setP}^*_w$ have the same generalized cost $\pi^*_w$:
    \begin{align}
        \pi^*_w = \sum_{a\in l}(t^*_a + \mu^*_a), \forall l\in{\setP}^*_w, w\in\W.
    \end{align} 
\end{lemma}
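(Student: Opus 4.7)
The plan is to derive the claim as an augmented Wardrop first-principle statement for the lower-level traffic assignment, in which the capacity-constraint multipliers $\mu^*_a$ play the role of additional link tolls. Concretely, I would work from the KKT conditions of the lower-level VI problem \eqref{eq:lower_simp_whole}, treating the MT capacity constraint \eqref{eq:lower_simp_MT_constraint} with Lagrangian multipliers $\mu_a\ge 0$ (and the MoD capacity and flow-conservation constraints with their own multipliers), and show that these KKT conditions are equivalent to a complementarity system of the same form as \eqref{eq:UE_MaaS} but with $t_a$ replaced by $t_a+\mu_a$.

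First I would fix the MaaS demand $q$ at its optimal value $q^*$ and restrict attention to the MaaS block of the VI, i.e.\ the inequality $\langle t^*(z^*), x - x^*\rangle \ge 0$ for all $x\in\Omega(q^*)$ subject to the MT capacity constraint. Writing the Lagrangian and taking stationarity with respect to $x^w_a$ for each OD pair $w:(r,s)$ and each link $a:(i,j)$, I expect to recover
\begin{align*}
0\leq \bigl(\pi^*_{js}+t^*_a+\mu^*_a-\pi^*_{is}\bigr) \perp x^{w,*}_a \geq 0,
\end{align*}
where $\pi^*_{is}$ is the multiplier for the flow-conservation constraint at node $i$ for destination $s$ and $\mu^*_a$ is the multiplier for the capacity constraint on $a\in\A_\text{MT-r}$ (set to zero on all other links, matching the convention announced in the paper). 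This is the standard way capacitated user-equilibrium problems are handled, e.g.\ as in the augmented Lagrangian treatment cited in Section~\ref{sec:algo}.

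Next I would translate this link-based complementarity to a path-based one. Summing the active equalities $\pi^*_{js}+t^*_a+\mu^*_a-\pi^*_{is}=0$ along any path $l\in\setP^*_w$ carrying positive MaaS flow, the intermediate node potentials telescope and I obtain
\begin{align*}
\sum_{a\in l}(t^*_a+\mu^*_a) = \pi^*_{rs} =: \pi^*_w,\quad \forall l\in\setP^*_w,\ w\in\W.
\end{align*}
For unused paths, the nonnegativity part of the complementarity only gives an inequality $\sum_{a\in l}(t^*_a+\mu^*_a)\ge \pi^*_w$, which is consistent but not asserted by the lemma. Thus every active MaaS path shares the common generalized cost $\pi^*_w$, as claimed.

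The main obstacle I anticipate is making the KKT derivation fully rigorous in the presence of the equilibrium constraints at the lower level and the nonsmoothness introduced by the $[\,\cdot\,]_+$ in the augmented link cost \eqref{eq:augment_link_cost}: I need to argue that a constraint qualification (e.g.\ linear independence of the active constraints in $\Gamma(q^*)$, which follows from the affine structure of $\Omega$ and the capacity half-spaces) holds so that Lagrange multipliers $\mu^*_a$ exist and are unique, and that the $\mu^{(k)}_a$ produced by Algorithm~\ref{alg:assignment} indeed converge to these multipliers in the limit (guaranteeing the interpretation of $\mu^*_a$ announced just before Eq.~\eqref{eq:MaaS_traveler_payoff}). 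Once these technical points are settled, the telescoping argument along active paths is routine.
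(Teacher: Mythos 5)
Your proposal is correct and follows essentially the same route as the paper's own proof: write the complementarity (KKT) conditions of the capacitated lower-level problem with multipliers $\mu^*_a$ attached to the MT capacity constraints, observe that links carrying positive MaaS flow satisfy $\pi^*_{is}=\pi^*_{js}+t^*_a+\mu^*_a$, and telescope these equalities along any active path with $\pi^*_{ss}=0$. Your additional remarks on constraint qualification and on the convergence of the algorithmic multipliers $\mu^{(k)}_a$ go beyond what the paper spells out, but the core argument is the same.
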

\begin{proof}
    See Appendix \ref{proof:MaaS_UE_cost}.
\end{proof}

\begin{proposition}[Reduced stability conditions for MaaS pricing]\label{prop:stable_cond_simp}
    Suppose the capacity price is set to be $p^s_a=p^s\lambda_a$, where $\lambda_a>0$ is an exogenous attribute of link $a$. Then, the stability condition can be reduced as follows:
    \begin{enumerate}
        \item The first condition \eqref{eq:stable_cond_MaaS} is naturally satisfied if $p^s_a\geq0,\forall a\in\A_\text{MT}\cup\A_\text{MoD}$. 
        \item The second condition \eqref{eq:stable_cond_nonMaaS} is equivalent to 
        \begin{align}
            p^d_w \leq \min_{l'\in\setP_w} \left\{ \sum_{a\in l'} (t^*_a + \mu^*_a) + \sum_{a\in l'\cap \A_\text{tr}} P_0\right\} - \pi^*_w  + p^s \min_{l\in \setP^*_w} \left\{\sum_{a\in l} \lambda_a\right\}, \label{eq:stable_cond_nonMaaS_simp}
        \end{align}
    \end{enumerate}
\end{proposition}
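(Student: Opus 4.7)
The plan is to substitute the payoff expressions \eqref{eq:MaaS_traveler_payoff}--\eqref{eq:nonMaaS_operator_payoff} directly into the two stability inequalities and simplify using Lemma~\ref{lemma:MaaS_UE_cost} together with the complementary-slackness version of the capacitated UE conditions, which forces any inactive MaaS path $l'\notin\setP^*_w$ to satisfy $\sum_{a\in l'}(t^*_a+\mu^*_a)\geq\pi^*_w$.

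For the first claim, I would first note that $\sum_{m\in\M}v_{m,l}=\sum_{a\in l}p^s_a$ because each link belongs to exactly one operator. Substituting \eqref{eq:MaaS_traveler_payoff} and \eqref{eq:MaaS_operator_payoff} into \eqref{eq:stable_cond_MaaS}, the OD fare $p^d_w$ and the utility $U_w$ cancel on both sides, and Lemma~\ref{lemma:MaaS_UE_cost} reduces the $l$-side generalized-cost term to $\pi^*_w$. The inequality becomes
\begin{align*}
\bigl[\sum_{a\in l'}(t^*_a+\mu^*_a)-\pi^*_w\bigr]+\bigl[\sum_{a\in l}p^s_a-\sum_{a\in l'}p^s_a\bigr]\;\geq\;0.
\end{align*}
The first bracket is nonnegative by the capacitated-UE inequality applied to $l'\notin\setP^*_w$. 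For the second, I would invoke nonnegativity of $p^s_a=p^s\lambda_a$ together with an exchange argument on the optimal lower-level assignment: at optimum the platform cannot strictly improve its position by shifting flow from a matched $l$ to an unmatched $l'$, which combined with $p^s\geq 0$ forces the combined expression to be nonnegative across every pair $(l,l')$.

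For the second claim, I would substitute \eqref{eq:MaaS_traveler_payoff}--\eqref{eq:MaaS_operator_payoff} on the LHS and \eqref{eq:nonMaaS_traveler_payoff}--\eqref{eq:nonMaaS_operator_payoff} on the RHS of \eqref{eq:stable_cond_nonMaaS}. The key bookkeeping step is a partial cancellation: on MT and MoD links the non-MaaS fare $\tilde p_a$ is both a traveler cost in $\tilde u_{w,l'}$ and the matched operator's revenue in $\sum_m\tilde v_{m,l'}$ and thus cancels, while on transfer links $a\in\A_\text{tr}$ the planning cost $\tilde p_a=P_0$ is borne by the traveler but received by no operator and therefore survives as $\sum_{a\in l'\cap\A_\text{tr}}P_0$. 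Applying Lemma~\ref{lemma:MaaS_UE_cost} on the LHS and rearranging for $p^d_w$ yields
\begin{align*}
p^d_w\;\leq\;\sum_{a\in l'}(t^*_a+\mu^*_a)+\sum_{a\in l'\cap\A_\text{tr}}P_0-\pi^*_w+\sum_{a\in l}p^s_a,
\end{align*}
required for every $l\in\setP^*_w$ and every $l'\in\setP_w$. Because the two quantifiers decouple, I would tighten independently: minimize the first pair of terms over $l'$ and, using $p^s_a=p^s\lambda_a$ with $p^s>0$, minimize $\sum_{a\in l}\lambda_a$ over $l\in\setP^*_w$, recovering exactly \eqref{eq:stable_cond_nonMaaS_simp}.

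I expect Part~1 to be the main obstacle. Substitution and Lemma~\ref{lemma:MaaS_UE_cost} expose a clean sum-of-two-brackets form, but closing the inequality requires a careful argument---not merely $p^s\geq 0$ in isolation---to rule out the corner case where extra operator revenue along $l'$ could outweigh the extra generalized cost. This is where I expect to spend most of the effort, likely via a side-payment feasibility argument or an appeal to the specific form of the KKT conditions of the lower-level VI. Part~2, by contrast, is essentially algebraic once the partial cancellation of $\tilde p_a$ across link categories is identified.
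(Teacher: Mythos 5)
Your Part~2 matches the paper's proof essentially step for step: substitute the payoffs, observe that $\tilde p_a$ cancels between the traveler's cost and the operator's revenue on service links but survives as $P_0$ on transfer links (which belong to no operator), apply Lemma~\ref{lemma:MaaS_UE_cost} to collapse the matched-path cost to $\pi^*_w$, and then decouple the two quantifiers into independent minimizations over $l'$ and $l$. No issues there.

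Part~1 has a genuine gap, and it sits exactly where you predicted trouble. By retaining the deviation-path operator revenue $\sum_{m}v_{m,l'}=\sum_{a\in l'}p^s_a$ on the right-hand side of \eqref{eq:stable_cond_MaaS}, you are left needing $\sum_{a\in l'}(t^*_a+\mu^*_a)-\pi^*_w \geq \sum_{a\in l'}p^s_a-\sum_{a\in l}p^s_a$, which does not follow from $p^s\geq 0$: an unmatched path $l'$ whose generalized cost is only slightly above $\pi^*_w$ but which traverses links with a large total $\lambda_a$ violates it once $p^s$ is large enough. The ``exchange argument on the optimal lower-level assignment'' cannot rescue this, because the lower-level VI \eqref{eq:lower_simp} is posed purely in travel times and is solved before any price is chosen; its optimality conditions carry no information whatsoever about $p^s_a$, so there is no KKT or side-payment relation linking $\sum_{a\in l}p^s_a$ to $\sum_{a\in l'}p^s_a$. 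The paper closes Part~1 with a modeling step you are missing: for a deviation to a path $l'\notin\setP^*_w$ the platform purchases no capacity, so the operators earn no MaaS revenue from that match and the term $\sum_m v_{m,l'}$ is dropped entirely (the paper states this as ``the capacity purchase price is zero for links without MaaS flows''). With that term gone, the condition reduces to $\sum_{a\in l}p^s_a\geq \pi^*_w-\sum_{a\in l'}(t^*_a+\mu^*_a)$, whose right-hand side is nonpositive by Lemma~\ref{lemma:MaaS_UE_cost}, and $p^s_a\geq 0$ suffices immediately.
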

\begin{proof}[Proof of statement 1.]
    First, note that the capacity purchase price is zero for links without MaaS flows, i.e., $p^s_a = 0$ if $x_a^* = 0$. 
    Plugging Eqs.~\eqref{eq:MaaS_traveler_payoff} and \eqref{eq:MaaS_operator_payoff} into Eq.~\eqref{eq:stable_cond_MaaS}, we have $\forall l\in{\setP}^*_w,\; l'\in{\setP}_w\setminus{\setP}^*_w ,\; w\in\W$, 
    \begin{align}
        &\left(U_w - \sum_{a\in l} (t^*_a + \mu^*_a) - p^d_w \right)+ \left(\sum_{m\in\M} \sum_{a\in l\cap\A^m}p^s_a\right)\geq U_w - \sum_{a\in l'} (t^*_a + \mu^*_a) - p^d_w\nonumber\\
        \Rightarrow \quad &- \sum_{a\in l} t^*_a + \sum_{a\in l}(p^s_a - \mu^*_a) \geq - \sum_{a\in l'} t^*_a - \sum_{a\in l'}\mu^*_a\\
        \Rightarrow \quad & \sum_{a\in l} p^s_a \geq \sum_{a\in l}(t^*_a + \mu^*_a)  - \sum_{a\in l'}(t^*_a + \mu^*_a).
    \end{align}
    As per Lemma~\ref{lemma:MaaS_UE_cost}, we have 
    \begin{align}
        \pi^*_w = \sum_{a\in l}(t^*_a + \mu^*_a)  \leq  \sum_{a\in l'}(t^*_a + \mu^*_a),\quad \forall l\in{\setP}^*_w,\; l'\in{\setP}_w\setminus{\setP}^*_w ,\; w\in\W,
    \end{align}
    where $\pi^*_w$ is the minimum travel time between OD pair $w$ under the optimal assignment. 

    Therefore, the stability condition \eqref{eq:stable_cond_MaaS} holds as long as 
    \begin{align}
        \sum_{a\in l} p^s_a \geq \pi^*_w - \min_{l'\in \setP_w\setminus \setP^*_w} \left\{\sum_{a\in l'}(t^*_a + \mu^*_a)\right\}, \quad \forall w\in \W
    \end{align}
    which is naturally satisfied with the feasible capacity price $p^s_a\geq 0,\forall a\in\A_\text{MT}\cup\A_\text{MoD}$. 
\end{proof}

\begin{proof}[Proof of statement 2.]
    Similar to the proof of statement 1, we first plug Eqs.~\eqref{eq:nonMaaS_traveler_payoff} and \eqref{eq:nonMaaS_operator_payoff} into Eq.~\eqref{eq:stable_cond_nonMaaS} to expand the condition as
    \begin{align}
         &\left(U_w - \sum_{a\in l} (t^*_a + \mu^*_a) - p^d_w \right)+ \left(\sum_{m\in\M} \sum_{a\in l\cap\A^m}p^s_a\right)\\
        &\geq  \left(U_w - \sum_{a\in l'} (t^*_a + \mu^*_a + \tilde{p}_a)\right) + \left(\sum_{m\in\M} \sum_{a\in l'\cap\A^m}\tilde{p}_a \right)\nonumber\\
        \Rightarrow \quad & -\sum_{a\in l} (t^*_a + \mu^*_a - p^s_a) - p^d_w \geq -\sum_{a\in l'} (t^*_a + \mu^*_a) - \sum_{a\in l'\cap \A_\text{tr}} P_0 \\
        \Rightarrow \quad & p^d_w \leq \sum_{a\in l'} (t^*_a + \mu^*_a)  + \sum_{a\in l'\cap \A_\text{tr}} P_0 - \sum_{a\in l} (t^*_a + \mu^*_a  - p^s_a).
    \end{align}
    To further simplify the condition, we replace $\pi^*_w = \sum_{a\in l}(t^*_a+\mu^*_a)$ as per Lemma~\ref{lemma:MaaS_UE_cost} and rewrite $p^s_a = p^s\lambda_a$ as per the proposed pricing scheme. 
    Since the inequality holds for any $l\in\setP^*_w$ and $l'\in{\setP}_w$, it is thus equivalent to 
    \begin{align}
        p^d_w \leq \min_{l'\in{\setP}_w}\left\{\sum_{a\in l'} (t^*_a + \mu^*_a) + \sum_{a\in l'\cap \A_\text{tr}} P_0\right\} - \pi^*_w + p^s \min_{l\in\setP^*_w}\left\{\sum_{a\in l}\lambda_a\right\}.
    \end{align}
\end{proof}
We note that the capacity pricing scheme proposed in Proposition~\ref{prop:stable_cond_simp} effectively avoids the challenge of enumerating the path set under optimal MaaS assignment $\setP^*_w$ and thus makes the solution approach scalable.

The stability condition can be further simplified if the following sufficient condition is satisfied. 
\begin{corollary}[Sufficient stability condition]
If $p^s \geq 0$ and $p^d_w \leq U_w - \pi^*_w,\forall w\in W$, then a sufficient stability condition reads
\begin{align}
    \tau^*_{w,\text{min}}= \min_{l'\in\setP_w} \left\{ \sum_{a\in l'} (t^*_a + \mu^*_a) + \sum_{a\in l'\cap \A_\text{tr}} P_0\right\} \geq U_w, \; \forall w \in \mathcal{W}.
\end{align}
\end{corollary}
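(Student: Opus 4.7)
The plan is to chain Proposition~\ref{prop:stable_cond_simp} with two elementary inequalities to verify the two stability conditions in turn. First I would invoke Statement~1 of Proposition~\ref{prop:stable_cond_simp}: since the proposed pricing scheme gives $p^s_a = p^s\lambda_a$ with $\lambda_a>0$, the hypothesis $p^s\geq 0$ immediately yields $p^s_a\geq 0$ for every $a\in\A_\text{MT}\cup\A_\text{MoD}$, and so the first stability condition~\eqref{eq:stable_cond_MaaS} holds automatically. It then suffices to verify the second condition~\eqref{eq:stable_cond_nonMaaS}, which by Statement~2 of Proposition~\ref{prop:stable_cond_simp} is equivalent to
\begin{equation*}
p^d_w \;\leq\; \tau^*_{w,\text{min}} - \pi^*_w + p^s\min_{l\in\setP^*_w}\Bigl\{\sum_{a\in l}\lambda_a\Bigr\}, \quad \forall w\in\W.
\end{equation*}

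Next I would verify this inequality by a short three-step chain. The hypothesis $\tau^*_{w,\text{min}}\geq U_w$ gives $\tau^*_{w,\text{min}}-\pi^*_w \geq U_w-\pi^*_w$. Combined with the individual-rationality assumption $p^d_w\leq U_w-\pi^*_w$, this yields $p^d_w\leq \tau^*_{w,\text{min}}-\pi^*_w$. Finally, because $p^s\geq 0$ and $\lambda_a>0$ for every link, the additional term $p^s\min_{l\in\setP^*_w}\{\sum_{a\in l}\lambda_a\}$ is non-negative, and adding it to the right-hand side preserves the inequality. Collecting the two verified conditions completes the argument.

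There is no real obstacle once Proposition~\ref{prop:stable_cond_simp} is in hand: the entire proof is a direct chain of inequalities and does not require any new technical tool. The only conceptual point worth flagging in the write-up is \emph{why} the simplified bound is natural. The condition $\tau^*_{w,\text{min}}\geq U_w$ says that even the cheapest non-MaaS path produces non-positive net utility for a traveler; in that regime any MaaS traveler strictly prefers to stay in the system as long as the OD fare does not exhaust the utility surplus $U_w-\pi^*_w$. This is precisely why the $p^s$-dependent term on the right-hand side drops out of the sufficient form, so the resulting condition no longer involves the matched path set $\setP^*_w$, which is typically hard to enumerate and was the main source of difficulty in Proposition~\ref{prop:stable_cond_simp}.
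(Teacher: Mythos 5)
Your proposal is correct and follows essentially the same route as the paper: the paper's proof is exactly the chain $p^d_w \leq U_w - \pi^*_w \leq \tau^*_{w,\text{min}} - \pi^*_w \leq \tau^*_{w,\text{min}} - \pi^*_w + p^s \lambda^*_{w,\text{min}}$, which is what you write out (with the additional, harmless, explicit check of the first stability condition via Statement~1 of Proposition~\ref{prop:stable_cond_simp}).
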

\begin{proof}
    If $\tau^*_{w,\text{min}} \geq U_w$, we have:
    \begin{subequations}
        \begin{align}
        p^d_w &\leq U_w - \pi^*_w \leq \tau^*_{w,\text{min}} - \pi^*_w \leq \tau^*_{w,\text{min}} - \pi^*_w + p^s \lambda^*_\text{min}, \nonumber
        \end{align}
    \end{subequations}
    where $\lambda^*_{w, \text{min}} = \min_{l\in \setP^*_w} \left\{\sum_{a\in l} \lambda_a\right\}$. 
    Hence, the stability condition~\eqref{eq:stable_cond_nonMaaS_simp} is satisfied. 
\end{proof}

\subsubsection{Optimal pricing}\label{sec:optimal_pricing}
Now we are ready to present the optimal pricing problem as follows:
\begin{subequations}\label{eq:pricing}
   \begin{align}
        \max_{p^d,p^s}\quad & \sum_{w\in\W} p^d_w q^*_w - \sum_{a\in \A_r} p^s_a x^*_a,\\
        s.t.\quad & u_{w,l}\geq 0,\quad && \forall l\in{\setP}^*_w, w\in\W,\label{eq:pricing_constraint_traveler_payoff}\\
        & \sum_{a\in\A^m_\text{MT-r}} (p^s_ax^*_a + \tilde{p}_a\tilde{x}^*_a) \geq B_m,\quad && \forall m\in{\M}_\text{MT},\label{eq:pricing_constraint_MToperator_revenue}\\
        & \sum_{a\in\A^m_\text{MoD-r}} (p^s_ax^*_a + \tilde{p}_a\tilde{x}^*_a) \geq B_m,\quad && \forall m\in{\M}_\text{MoD},\label{eq:pricing_constraint_MoDoperator_revenue}\\
        & \text{Stability condition \eqref{eq:stable_cond_MaaS} and \eqref{eq:stable_cond_nonMaaS}}.
    \end{align} 
\end{subequations}
Constraint~\eqref{eq:pricing_constraint_traveler_payoff} requires that all MaaS trips yield a non-negative payoff and it is equivalent to
\begin{align}
    p^d_w \leq U_w - \sum_{a\in l} (t^*_a + \mu^*_a) = U_w - \pi^*_w.
\end{align}
Constraints~\eqref{eq:pricing_constraint_MToperator_revenue} and \eqref{eq:pricing_constraint_MoDoperator_revenue} state the total revenue of each operator $m$ must be no less than $B_m$, which can be specified as the operator's revenue prior to the introduction of MaaS platform.

Due to the proposed capacity purchase pricing scheme and Proposition~\ref{prop:stable_cond_simp}, the pricing problem can be simplified as 
\begin{subequations}\label{eq:pricing_simp}
   \begin{align}
        \max_{p^d,p^s}\quad & \sum_w p^d_w q^*_w - p^s\sum_{a\in \A_r} \lambda_a x^*_a,\\
        s.t.\quad & p^d_w \leq U_w - \pi^*_w,\quad && \forall 
        w\in\W, \label{eq:pricing_constraint_MaaS_traveler_nonneg_payoff_explicit}\\
        & p^d_w \leq \tau^*_{w,\text{min}} - \pi^*_w  + p^s \lambda^*_{w, \text{min}},\quad && \forall w\in\W, \label{eq:pricing_constraint_MaaS_traveler_stable_explicit}\\
        & \sum_{a\in\A^m_\text{MT-r}} (p^s\lambda_ax^*_a + \tilde{p}_a\tilde{x}^*_a) \geq B_m,\quad && \forall m\in{\M}_\text{MT}, \label{eq:pricing_constraint_MToperator_revenue_factorized}\\
        & \sum_{a\in\A^m_\text{MoD-r}} (p^s\lambda_ax^*_a + \tilde{p}_a\tilde{x}^*_a) \geq B_m,\quad && \forall m\in{\M}_\text{MoD},\label{eq:pricing_constraint_MoDoperator_revenue_factorized}\\
        & p^s \geq 0\label{eq:pricing_constraint_nonneg},
    \end{align} 
\end{subequations}
where $\tau^{*}_{w,\text{min}} = \min_{l'\in\setP_w} \left\{ \sum_{a\in l'} (t^*_a + \mu^*_a) + \sum_{a\in l'\cap \A_\text{tr}} P_0\right\}$ and $\lambda^*_{w, \text{min}} = \min_{l\in \setP^*_w} \left\{\sum_{a\in l} \lambda_a\right\}$ are readily computed from the optimal assignment results.

Consequently, problem~\eqref{eq:pricing_simp} reduces to a simple linear program and can be solved via any commercial solver. Moreover, the proposed pricing problem guarantees the existence of a stable outcome and often a finite unique optimal pricing scheme given each optimal assignment. These results are formally presented below.

\begin{proposition}[Existence of stable outcome and finite optimal pricing]\label{prop:existence_stable_outcome}
    The stable outcome space of the MaaS many-to-many matching problem is nonempty. Further, given an optimal assignment $(q^*, x^*)$, the pricing problem \eqref{eq:pricing_simp} has at least one finite optinoptimal solution.
\end{proposition}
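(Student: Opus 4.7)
The strategy is to notice that problem~\eqref{eq:pricing_simp} is a linear program in the variables $(p^d,p^s)\in\bbR^{|\W|}\times\bbR_{\geq 0}$, so both claims reduce to the fundamental LP dichotomy: a feasible linear program with an objective bounded above admits a finite optimal solution. I would therefore prove feasibility (which, in particular, establishes nonemptiness of the stable outcome space) and then boundedness above of the objective, and combine the two via the Fundamental Theorem of Linear Programming.

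For feasibility I would exhibit an explicit feasible pricing. The key structural observation is that the operator-revenue constraints \eqref{eq:pricing_constraint_MToperator_revenue_factorized}--\eqref{eq:pricing_constraint_MoDoperator_revenue_factorized} impose only a lower bound on the scalar $p^s$, while constraints \eqref{eq:pricing_constraint_MaaS_traveler_nonneg_payoff_explicit}--\eqref{eq:pricing_constraint_MaaS_traveler_stable_explicit} impose only upper bounds on each $p^d_w$. Concretely, for any operator $m$ with $\sum_{a\in\A^m}\lambda_a x^*_a>0$, the revenue constraint rearranges into an explicit lower bound on $p^s$; I would set $p^s$ to the maximum of these lower bounds and $0$. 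With $p^s$ so fixed, setting
\[
 p^d_w \;=\; \min\bigl\{U_w-\pi^*_w,\;\tau^*_{w,\min}-\pi^*_w+p^s\lambda^*_{w,\min}\bigr\}
\]
satisfies \eqref{eq:pricing_constraint_MaaS_traveler_nonneg_payoff_explicit}--\eqref{eq:pricing_constraint_MaaS_traveler_stable_explicit} with equality in at least one of them, and no remaining constraint acts on $p^d_w$ from below.

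Boundedness is then immediate: constraint \eqref{eq:pricing_constraint_MaaS_traveler_nonneg_payoff_explicit} yields $p^d_w\leq U_w-\pi^*_w$, and since $q^*_w\geq 0$, the demand-side revenue satisfies $\sum_w p^d_w q^*_w \leq \sum_w (U_w-\pi^*_w)q^*_w<\infty$; meanwhile $p^s\geq 0$ together with $\lambda_a,x^*_a\geq 0$ makes the supply-side term $-p^s\sum_{a\in\A_r}\lambda_a x^*_a\leq 0$. Hence the objective is bounded above by the finite constant $\sum_w (U_w-\pi^*_w)q^*_w$, and the Fundamental Theorem of Linear Programming delivers a finite optimal $(p^{d*},p^{s*})$, proving both assertions of the proposition.

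The principal obstacle is the degenerate case in which some operator $m$ receives no MaaS flow, so $\sum_{a\in\A^m}\lambda_a x^*_a=0$ and the revenue constraint for $m$ collapses to $\sum_{a\in\A^m}\tilde p_a \tilde x^*_a\geq B_m$, which is independent of $(p^d,p^s)$ and must therefore be verified directly from the assignment rather than by any choice of price. I would handle this by leveraging the interpretation of $B_m$ as the operator's pre-MaaS revenue baseline---arguing that an operator to which the MaaS platform diverts no flow faces non-MaaS demand no smaller than in the pre-MaaS scenario---or by stating this as a mild standing regularity assumption; neither modifies the LP argument above.
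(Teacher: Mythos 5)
Your proposal is correct and follows the same two-step skeleton as the paper's proof (feasibility of the LP, then finiteness of its optimum), but your execution of the second step is tighter and worth contrasting. The paper establishes feasibility by observing that the ``infinite solution'' $p^d_w=-\infty$, $p^s=+\infty$ is feasible, and then argues finiteness by exhibiting one finite boundary point (with $p^d_w=U_w-\pi^*_w$ binding and $p^s$ at the lower bound implied by \eqref{eq:pricing_constraint_MaaS_traveler_stable_explicit}--\eqref{eq:pricing_constraint_nonneg}) whose objective value exceeds that of the infinite solution. Strictly speaking, beating one unbounded point does not rule out an unbounded objective; what closes the argument is exactly what you supply: the uniform upper bound $\sum_w p^d_w q^*_w - p^s\sum_{a}\lambda_a x^*_a \le \sum_w (U_w-\pi^*_w)q^*_w$ over the entire feasible set, coming from $p^d_w\le U_w-\pi^*_w$, $q^*_w\ge 0$, $p^s\ge 0$ and $\lambda_a x^*_a\ge 0$. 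Your explicit finite feasible point (take $p^s$ at the max of the operator lower bounds and $0$, then set each $p^d_w$ to the min of its two upper bounds) is also cleaner than invoking an infinite point. Finally, you are right to flag the degenerate case $\sum_{a\in\A^m}\lambda_a x^*_a=0$: the paper's lower bound \eqref{eq:binding_p_s} silently divides by this quantity, and when it vanishes the revenue constraint for operator $m$ becomes price-independent and must hold (or fail) at the assignment itself. Handling it via the interpretation of $B_m$ as the pre-MaaS baseline, or as a standing regularity assumption, is a reasonable repair that the paper omits.
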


\begin{proof}
    We first prove the stable outcome space is nonempty. Recall that Proposition~\ref{prop:existence_assignment} proves the existence of optimal assignment. Hence, the remaining task is to prove, given an optimal assignment $(q^*, x^*)$, the pricing problem \eqref{eq:pricing_simp} is feasible, i.e., there exists a pricing scheme $p$ that satisfies Constraints~\eqref{eq:pricing_constraint_MaaS_traveler_nonneg_payoff_explicit}-\eqref{eq:pricing_constraint_nonneg}. This is trivial because an infinite solution $p_w^d = -\infty,\forall w\in\W$ and $p^s = \infty$ is feasible.

    Since the pricing problem is feasible, it always has optimal solutions. We next show the optimal solution is not taken at infinity. Note that the feasible set of \eqref{eq:pricing_simp} is closed. As per the property of linear programs, the optimal solution is always obtained at the boundary of the feasible set. 
    Now consider a boundary solution such that Constraint~\eqref{eq:pricing_constraint_MaaS_traveler_nonneg_payoff_explicit} is binding. Then, we have 
    \begin{align}
        p_w^{d} = U_w - \pi_w^*,\quad \forall w\in\W.\label{eq:binding_p_d}
    \end{align}
    Accordingly, Constraint~\eqref{eq:pricing_constraint_MaaS_traveler_stable_explicit} reduces to 
    \begin{align}
        p^s \geq \frac{U_w - \tau^*_{w,\text{min}}}{\lambda^*_{w, \text{min}}},\quad \forall w\in\W.
    \end{align}
    Further combined with Constraints~\eqref{eq:pricing_constraint_MToperator_revenue_factorized}-\eqref{eq:pricing_constraint_nonneg} yields
    \begin{align}\label{eq:binding_p_s}
        p^{s} \geq  \max\left\{0, \max_w \frac{U_w - \tau^*_{w,\text{min}}}{\lambda^*_{w, \text{min}}}, \max_{m\in \M} \frac{B_m - \sum_{a\in\A^m_\text{r}}\tilde{p}_a\tilde{x}^*_a}{\sum_{a\in\A^m_\text{r}} \lambda_ax^*_a} \right\}
    \end{align}
    Clearly, the objective value under the solution $p^d_w$ as per Eq.~\eqref{eq:binding_p_d} and $p^s$ as the lower bound of Eq.~\eqref{eq:binding_p_s} is higher than that under the infinite solution. Hence, the optimal solution must be finite. 
\end{proof}

\begin{corollary}[Uniqueness of optimal pricing]\label{coro:unique_optimal_pricing}
    Given a positive optimal assignment $(q^*,x^*)$ ($\sum_w q^*_w >0$), there exists a unique optimal pricing $p^{s*}$ and $p^{d*}_w,\;\forall w\in\W\; s.t.\; q^*_w>0$ under the following conditions:
    \begin{enumerate}
        \item There exists $w\in\W$ such that $q^*_w \lambda_{w, \text{min}}^*\neq \frac{1}{|\mathcal{W}|} \sum_{a\in \A} \lambda_a x^*_a$;
        \item There exists $w, w' \in \W$ such that $\frac{U_w - \tau^*_{w,\text{min}}}{\lambda^*_{w, \text{min}}} \neq \frac{U_{w'} - \tau^*_{{w'},\text{min}}}{\lambda^*_{{w'}, \text{min}}}$.
    \end{enumerate}
\end{corollary}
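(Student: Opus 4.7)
The plan is to reduce the linear program~\eqref{eq:pricing_simp} to a one-dimensional optimization in the scalar $p^s$, exploit the resulting piecewise-linear concave structure, and then invoke the two stated conditions to rule out a continuum of maximizers.

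First, I would observe that for each $w \in \W$ with $q^*_w > 0$, the coefficient of $p^d_w$ in the objective is strictly positive, so at any optimum $p^d_w$ must saturate the tighter of its two upper bounds \eqref{eq:pricing_constraint_MaaS_traveler_nonneg_payoff_explicit}--\eqref{eq:pricing_constraint_MaaS_traveler_stable_explicit}, giving $p^{d*}_w(p^s) = \min\{U_w - \pi^*_w,\; \tau^*_{w,\min} - \pi^*_w + p^s \lambda^*_{w,\min}\}$. Hence uniqueness of $p^{d*}_w$ follows automatically once uniqueness of $p^{s*}$ is established. Substituting $p^{d*}_w(p^s)$ into the objective and absorbing constraints \eqref{eq:pricing_constraint_MToperator_revenue_factorized}--\eqref{eq:pricing_constraint_nonneg} into a single lower bound $p^s \ge p^s_{\min}$ (as in the proof of Proposition~\ref{prop:existence_stable_outcome}), the pricing problem reduces to $\max_{p^s \ge p^s_{\min}} h(p^s)$, where $h$ is a sum of pointwise minima of affine functions, shifted by the linear term $-p^s \sum_{a \in \A} \lambda_a x^*_a$. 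Consequently $h$ is concave and piecewise linear in $p^s$, with kinks at the points $\bar p^s_w := (U_w - \tau^*_{w,\min})/\lambda^*_{w,\min}$.

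Next, I would compute the one-sided derivative on the interior of each linear piece,
\begin{equation*}
h'(p^s) \;=\; \sum_{w:\,\bar p^s_w > p^s} q^*_w \lambda^*_{w,\min} \;-\; \sum_{a\in \A} \lambda_a x^*_a,
\end{equation*}
which by concavity is non-increasing in $p^s$. Thus the argmax is a closed, possibly degenerate, interval, and non-uniqueness can arise only if $h'$ is identically zero on some subinterval of $[p^s_{\min},\infty)$. Condition 2 guarantees at least two distinct breakpoints $\bar p^s_w \neq \bar p^s_{w'}$, so $h$ is not globally affine and its slope strictly decreases across the relevant kinks, which prevents a flat piece from covering the entire feasible region. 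Condition 1 then rules out the remaining degeneracy in which the slope vanishes on the leftmost piece: were $q^*_w \lambda^*_{w,\min} = \tfrac{1}{|\W|}\sum_{a\in\A}\lambda_a x^*_a$ to hold for every $w$, summing over $w$ would give $\sum_w q^*_w \lambda^*_{w,\min} = \sum_{a\in\A}\lambda_a x^*_a$, so $h'(p^s) = 0$ on $[p^s_{\min},\min_w \bar p^s_w)$ and a continuum of optima would coexist.

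Together with the general bound $\sum_w q^*_w \lambda^*_{w,\min} \le \sum_{a\in\A}\lambda_a x^*_a$ (which follows from $\lambda^*_{w,\min}\le\sum_{a\in l}\lambda_a$ for every active path $l \in \setP^*_w$ combined with flow conservation), these observations imply that $h'$ cannot vanish on any subinterval meeting the feasible set; hence the maximizer $p^{s*}$ is a single point, and through step one $p^{d*}_w$ is uniquely determined for every $w$ with $q^*_w > 0$. The main obstacle I anticipate is this last step: carefully verifying that conditions 1 and 2 jointly eliminate every potential flat segment of $h$, including plateaus generated by coincident interior breakpoints and boundary degeneracies at $p^s = p^s_{\min}$, where a binding operator-revenue constraint may coincide with a kink of the objective. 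A small case analysis over which revenue constraint (if any) binds and over the ordering of the $\bar p^s_w$ seems unavoidable.
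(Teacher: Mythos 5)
Your reduction of \eqref{eq:pricing_simp} to the one-dimensional concave piecewise-linear problem $\max_{p^s\ge p^s_{\min}} h(p^s)$ is correct and genuinely different from the paper's argument (the paper instead introduces per-OD copies $p^s_w$ with the consensus constraint \eqref{eq:pricing_aux_consensus} and reasons graphically about the extreme points $I_1,I_2,I_3$ of each two-dimensional subproblem \eqref{eq:pricing_local}). Your derivative formula and the bound $\sum_w q^*_w\lambda^*_{w,\min}\le\sum_a\lambda_a x^*_a$ are also right, and the latter even recovers the paper's empirical observation that the optimum sits at the lower bound of the feasible region of $p^s$. The problem is the final step, which you yourself flag: the way you invoke Condition 1 proves the wrong implication. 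You show that if $q^*_w\lambda^*_{w,\min}=\frac{1}{|\W|}\sum_a\lambda_a x^*_a$ for \emph{every} $w$ then the leftmost piece of $h$ is flat, i.e.\ that failure of Condition 1 implies non-uniqueness. What the corollary requires is the converse: that Condition 1, an existential statement about a single OD pair deviating from the average, forces $h'$ to be nonzero at $p^s_{\min}$. It does not: with two OD pairs satisfying $q^*_1\lambda^*_{1,\min}=0.8\,C$ and $q^*_2\lambda^*_{2,\min}=0.2\,C$, where $C=\sum_a\lambda_a x^*_a$, both conditions of the corollary hold while $\sum_{w}q^*_w\lambda^*_{w,\min}=C$, so $h$ is flat on $[p^s_{\min},\min_w\bar p^s_w]$ and your own argument yields a continuum of maximizers. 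Likewise, Condition 2 only guarantees two distinct breakpoints; it does not prevent an \emph{interior} linear piece from having slope exactly zero and carrying the maximum. So the case analysis you defer at the end is not a routine verification---it is exactly where the conditions must do their work, and along your route they do not suffice as stated.

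For contrast, the paper attaches the two conditions to the decomposed subproblems: Condition 2 rules out a consensus value of $p^s$ at an $I_2$-type extreme point, whose coordinate $\left(U_w-\tau^*_{w,\min}\right)/\lambda^*_{w,\min}$ is OD-specific, and Condition 1 guarantees that at least one subproblem's objective is not parallel to Constraint \eqref{eq:pricing_local_p_d}, pinning that subproblem's optimum at $p^s_w=\underline{p}^s$ and, through consensus, all the others. If you want to salvage your cleaner one-dimensional route, the natural sufficient condition is the aggregate statement that $\sum_{w:\,\bar p^s_w>p^s}q^*_w\lambda^*_{w,\min}\neq\sum_a\lambda_a x^*_a$ on every linear piece of $h$ meeting $[p^s_{\min},\infty)$; you would then need to argue why the per-OD Conditions 1--2 imply it, and the example above shows they do not in general.
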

\begin{proof}
    To prove the solution uniqueness, we introduce an auxiliary price $p^s_w$ and rewrite the pricing problem as follows:
    \begin{subequations}\label{eq:pricing_aux}
       \begin{align}
            \max_{p^d,p^s}\quad & \sum_w \left(p^d_w q^*_w - \frac{p^s_w}{|\W|}\sum_{a\in \A_r} \lambda_a x^*_a\right),\\
            s.t. \quad & p^s_w = p^s_{w'}, \quad && \forall w,w'\in\W,\label{eq:pricing_aux_consensus}\\
            & p^d_w \leq U_w - \pi^*_w,\quad && \forall w\in\W,\\
            & p^d_w \leq \tau^*_{w,\text{min}} - \pi^*_w  + p^s_w \lambda^*_{w, \text{min}},\quad && \forall w\in\W,\\
            & \sum_{a\in\A^m_\text{r}} (p^s_w\lambda_ax^*_a + \tilde{p}_a\tilde{x}^*_a) \geq B_m,\quad && \forall w\in\W, m\in{\M}, \\
            & p^s_w \geq 0.
        \end{align} 
    \end{subequations}
    Problem \eqref{eq:pricing_aux} can be seen as a distributed optimization problem~\citep{yang2019survey}, where each OD pair solves its own price $(p^d_w, p^s_w)$ and ensures a consensus constraint Eq.~\eqref{eq:pricing_aux_consensus}. Specifically, each subproblem is given by 
    \begin{subequations}\label{eq:pricing_local}
        \begin{align}
            \max_{p^d_w,p^s_w}\quad & p^d_w q^*_w - \frac{p^s_w}{|\W|}\sum_{a\in \A_r} \lambda_a x^*_a,\\
            s.t. \quad & p^d_w \leq U_w - \pi^*_w,\\
            & p^d_w \leq \tau^*_{w,\text{min}} - \pi^*_w  + p^s_w \lambda^*_{w, \text{min}},\label{eq:pricing_local_p_d}\\
            & p^s_w \geq \underline{p}^s = \max \left\{0, \max_{m\in \M} \frac{B_m - \sum_{a\in\A^m_\text{r}}\tilde{p}_a\tilde{x}^*_a}{\sum_{a\in\A^m_\text{r}} \lambda_ax^*_a}\right\}. 
        \end{align}
    \end{subequations}
    A graphical illustration of subproblem~\eqref{eq:pricing_local} is presented in Figure~\ref{fig:optimal_pricing_region}. 
    \begin{figure}[htb]
    \centering
    \includegraphics[width=0.85\textwidth]{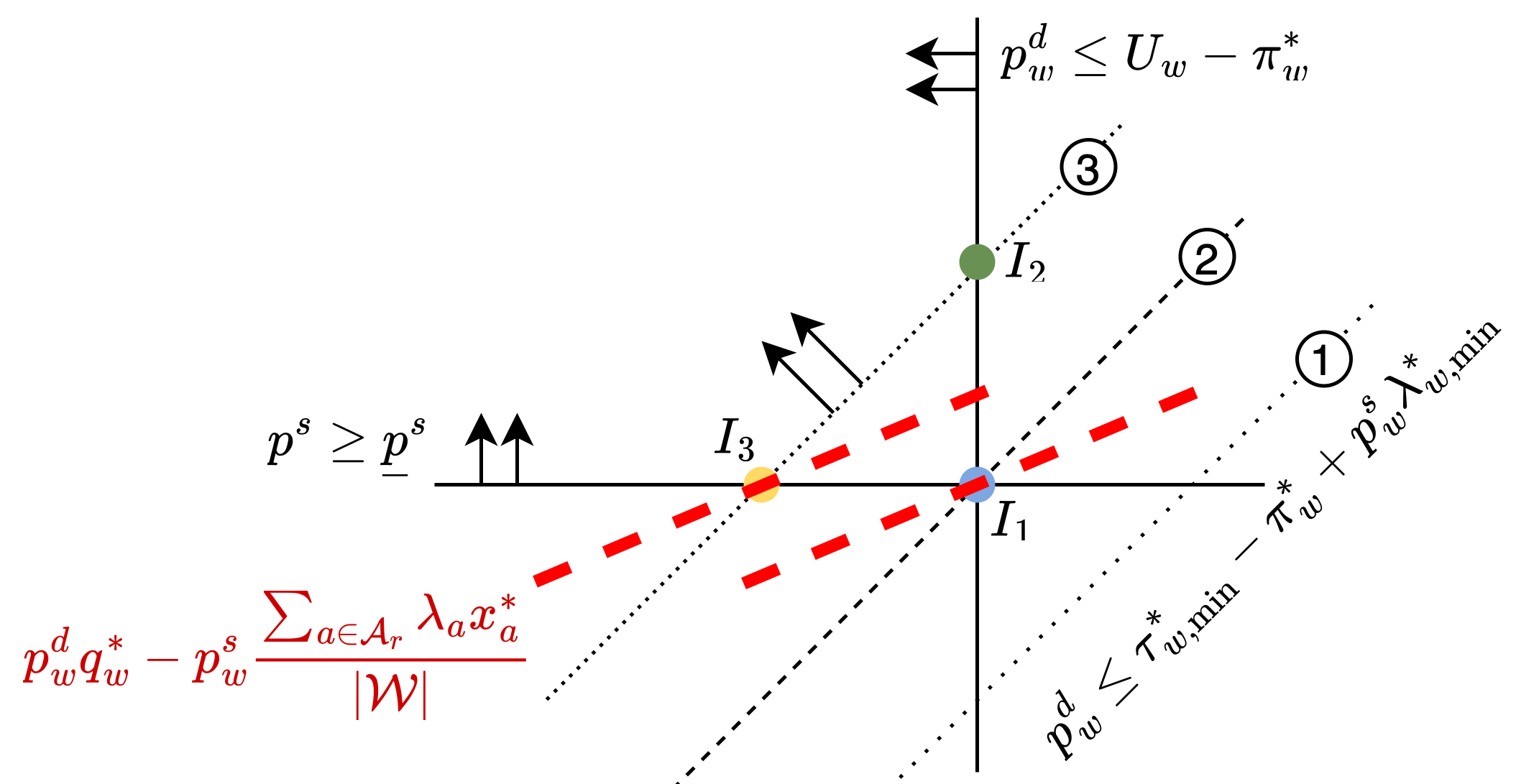}
    \caption{Graphical illustration of subproblem $w$.}
    \label{fig:optimal_pricing_region}
    \end{figure}
    Based on the location of Constraint~\eqref{eq:pricing_local_p_d}, we end up with different scenarios of optimal solutions. 
    For Cases 1 and 2 in Figure~\ref{fig:optimal_pricing_region}, there exists only one extreme point
    \begin{itemize}
        \item $I_1$: $p^d_w = U_w - \pi^*_w$ and $p^s_w = \underline{p}^s$.
    \end{itemize} 
    It is thus the unique optimal solution under the condition $q^*_w\neq 0$ and $\sum_{a\in\A_r}\lambda_a x^*_a \neq 0$. The latter naturally holds when the total MaaS demand is positive ($\sum_w q^*_w >0$). 
    
    In Case 3, there are two extreme points:
    \begin{itemize}
        \item $I_2$: $p^d_w = U_w - \pi^*_w$ and $p^s_w = \frac{U_w - \tau^*_{w,\text{min}}}{\lambda^*_{w, \text{min}}}$; 
        \item $I_3$: $p^d_w = \tau^*_{w,\min} - \pi^*_w + \underline{p}^s\lambda^*_{w,\min}$ and $p^s_w = \underline{p}^s$.
    \end{itemize}
    Clearly, $p^s_w$ in $I_2$ is unlikely to satisfy the consensus because it is determined by some OD-specific variables. Formally, if there exist  $w,w'\in\W$ such that $\frac{U_w - \tau^*_{w,\text{min}}}{\lambda^*_{w, \text{min}}} \neq \frac{U_{w'} - \tau^*_{w',\text{min}}}{\lambda^*_{w', \text{min}}}$, then $p^{s*}$ cannot be reached at a $I_2$-type extreme point. This gives the second condition in Corollary~\ref{coro:unique_optimal_pricing}. 

    In contrast, $p^s_w$ in $I_3$ does not vary among OD pairs. To ensure it is the unique optimal solution for the subproblem, we must have $q^*_w \neq 0$ and Constraint~\eqref{eq:pricing_local_p_d} is not parallel with the objective function (see Figure~\ref{fig:optimal_pricing_region}). The latter is equivalent to the condition $q^*_w\lambda^*_{w,\min} \neq \frac{1}{|\W|}\sum_{a\in \A_r} \lambda_a x^*_a$. Note that we only need this condition to hold for at least one OD pair (one subproblem has unique optimal $p^s_w = \underline{p^s}$), which gives the first condition in Corollary~\ref{coro:unique_optimal_pricing}. 
\end{proof}

\section{Numerical experiments}\label{sec:results}

\subsection{Experiment setup}\label{subsec:results_setup}
In this section, we conduct numerical experiments on a multi-modal mobility network extended from the Sioux Falls network. As illustrated in Figure~\ref{fig:SF_MaaS_net}, we add bidirectional metro lines and bus lines on top of the existing road network. In all experiments, we consider a single MoD platform providing e-hailing services. This introduces another copy of the road network, along with the access links. although the travel time on each link in the MoD subnetwork is jointly determined by the total flow that includes private vehicles. 
The subnetworks are then connected with each other through transfer links. The resulting network has 199 nodes and 456 links, compared to 24 nodes and 76 links in the original one. The number of OD pairs also doubles (1056 pairs) as we consider both MaaS and non-MaaS travelers. 
The link parameters and other exogenous variables are summarized in Appendix~\ref{sec:SF_link_parameter}. 
Note that we do not specify the value of time in the model and thus all fare costs are measured in equivalent travel time.

\begin{figure}[htb]
    \centering
    \includegraphics[width=0.8\textwidth]{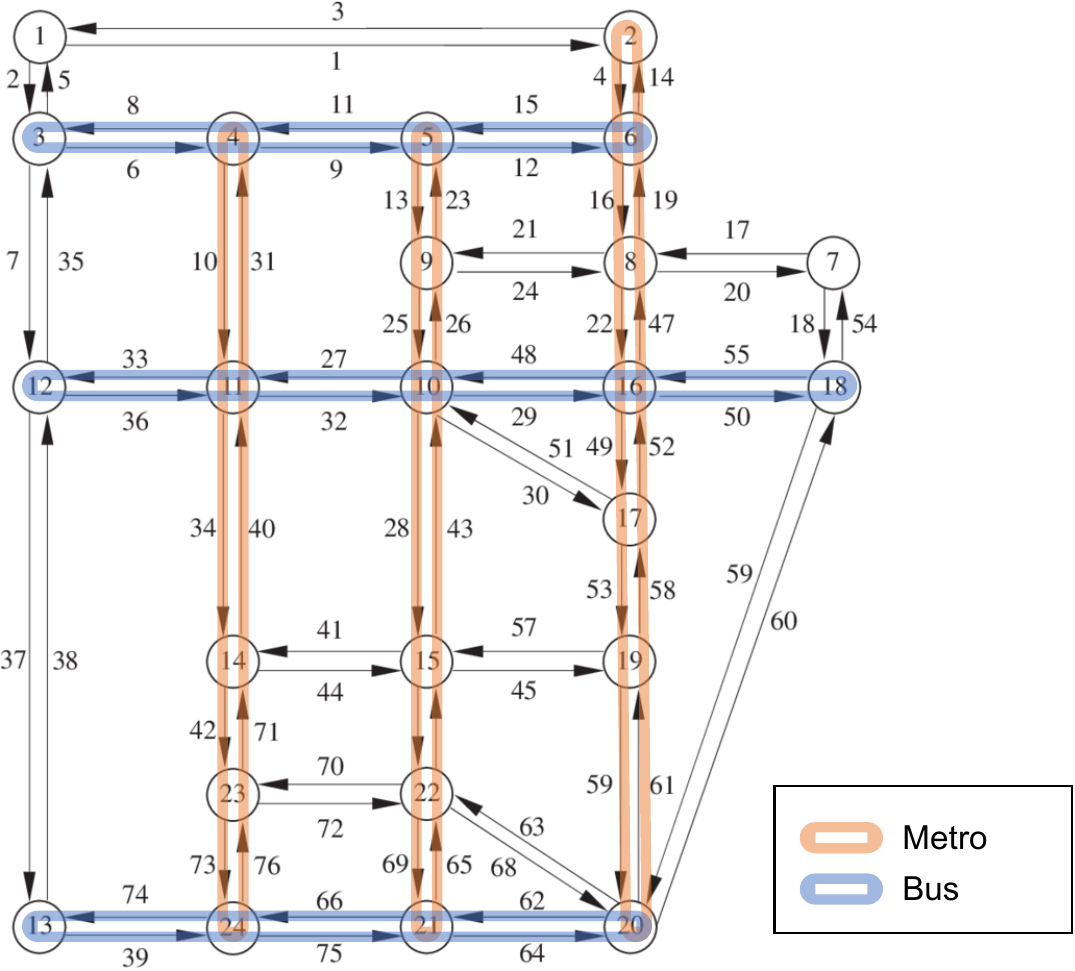}
    \caption{Multi-modal mobility network for Sioux Falls.}
    \label{fig:SF_MaaS_net}
\end{figure}

\begin{figure}[htb]
    \centering
    \includegraphics[width=0.65\textwidth]{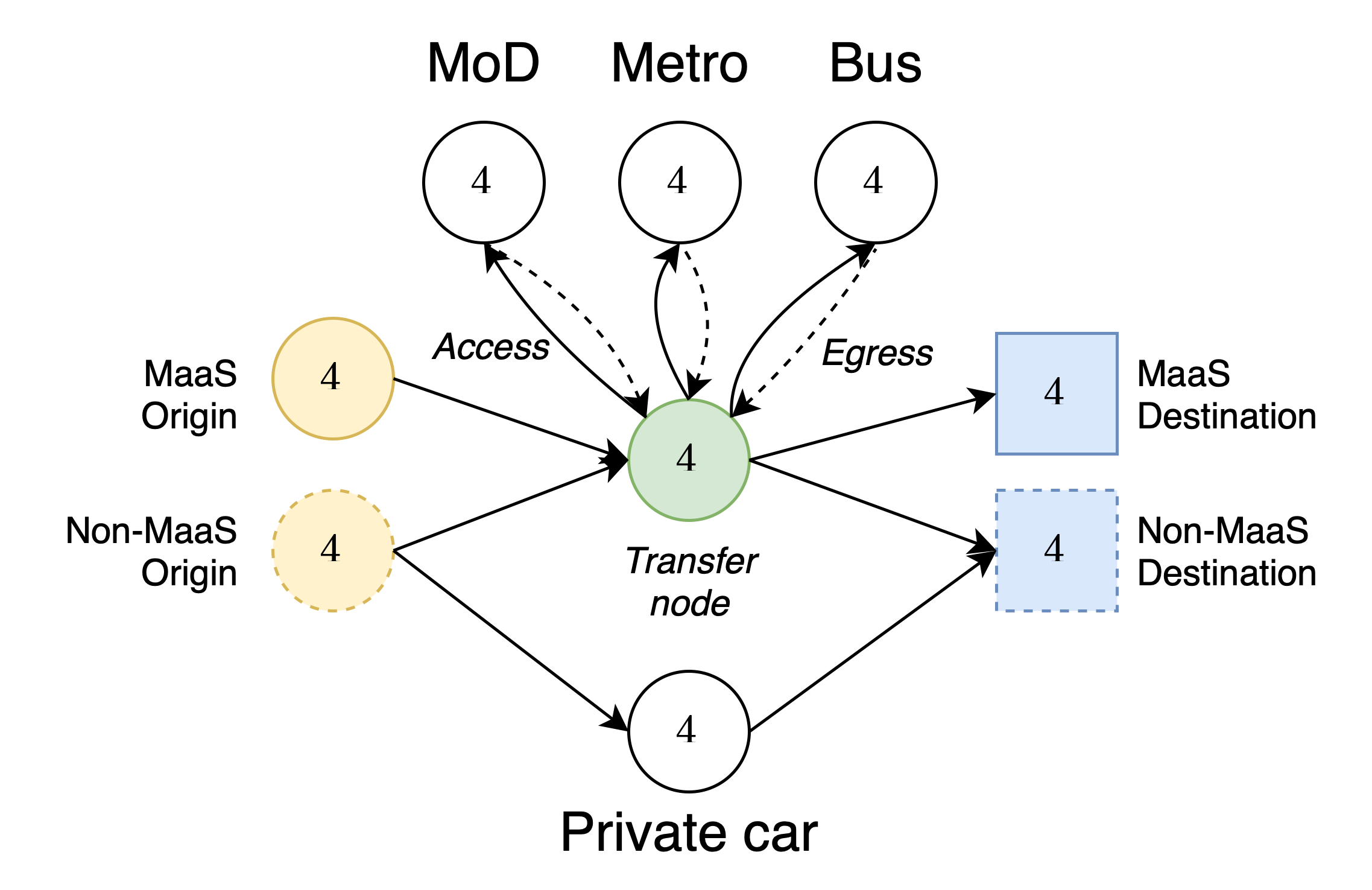}
    \caption{Example of multi-modal network connections.}
    \label{fig:transfer_node_example}
\end{figure}

Figure~\ref{fig:transfer_node_example} presents the multi-modal connections at node 4. To distinguish different types of traffic flows, we create nine copies of the node. Four of them refer to the origin and destination of MaaS and non-MaaS trips, respectively. They are connected to the transfer node through dummy links, and the transfer node is then connected to the MT and MoD subnetworks by access and egress links. Since non-MaaS travelers may choose to drive, the non-MaaS origin and destination nodes are also connected to node 4 in the road network. 
To simplify the network, we further merge transfer links with access links. This is equivalent to adding the transfer time on each access link meanwhile subtracting it on the dummy link from each MaaS origin node to the corresponding transfer node. The additional planning cost $P_0$ in non-MaaS trips is handled in a similar way.

In Proposition~\ref{prop:stable_cond_simp}, we introduce the capacity pricing scheme $p^s_a = p^s\lambda_a$, where $\lambda_a$ denotes some exogenous attribute of a regular link $a$. In the experiments, we set it as follows:
\begin{align}\label{eq:lambda}
    \lambda_a = 
    \begin{cases}
        \tilde{p}_a, & a\in\A_\text{MoD-r}\\
        \eta \tilde{p}_a, & a\in\A_\text{MT-r}\\
        0.5 t_a, & a\in\A_\text{MT-a}\\
        0, & \text{otherwise}
    \end{cases}.
\end{align}
Recall that $\tilde{p}_a$ is the non-MaaS trip fare on link $a$. Eq.~\eqref{eq:lambda} indicates that the MaaS platform can purchase capacity at a wholesale price and the discount is given by $1-p^s$. Considering the platform may have different negotiation power with MT and MoD operators, we further introduce $\eta$ as an exogenous MT capacity price factor. Lastly, we assume the MaaS also compensates for half of the operation cost in the MoD pickup process. Here, $t_a$ is considered exogenous because it is the output of the assignment problem.

All the numerical experiments are performed on an HPC cluster with a V100-32GB GPU.
The MaaS assignment problem is initialized from the base scenario without the MaaS platform (i.e., $q_0 = 0, x_0 = 0$, and $\tilde{x}_0$ as the corresponding equilibrium flows), such that the setting mimics the process of introducing MaaS into an existing transportation system. 
With this in consideration, we also set the trip utility $U_w$ equal to the equilibrium cost in the base scenario, i.e., travelers get zero net utility when there is no MaaS. The total runtime for the MaaS assignment problem is 3.68 hours, and the convergence results are shown in Figures~\ref{fig:outer_convergence} and \ref{fig:inner_convergence} in Appendix~\ref{sec:algo_performance}.

\subsection{Assignment results}

The main results of the assignment problem are reported in Table~\ref{tab:assignment_results}, along with those in the base scenario without the MaaS platform. 

Table~\ref{tab:assignment_results} first reports the split between MaaS and non-MaaS trips. Under the optimal assignment, 55.81\% of travelers would choose MaaS. Most of them use to take public transport and plan their multi-modal trips on their own, though there is also a considerable amount of demand that switch from driving. This demonstrates the potential of MaaS for substituting private driving with public transport by reducing the inconvenience of trip planning and fare payment. Accordingly, the share of demand for public transport increases from 61.40\% to 74.38\%. 

Although the assignment results do not produce the path flow, we could use the traffic flows traversing the transfer node to compute the aggregate mode transfer. As shown in Table~\ref{tab:assignment_results}, before the launch of MaaS, the average number of mode transfers in the multi-modal mobility network is 0.16. If each trip has at most one transfer, which is a reasonable assumption given the additional transfer time and planning cost, it implies 16\% of the trips are multi-modal. The value increases to 0.18 after MaaS is introduced, and the result suggests MaaS travelers take more multi-modal trips, compare to non-MaaS travelers. Some of these trips may already exist in the base scenario as the mode transfer in non-MaaS trips decreases. However, the overall 8.04\% growth in mode transfer indicates MaaS indeed promotes multi-modal travel. 

\begin{table}[H]
\caption{Assignment results}
\label{tab:assignment_results}
\centering
\setlength\tabcolsep{8pt}
\begin{threeparttable}
\begin{tabular}{lcl}
\toprule
 & \textbf{Without platform} & \textbf{With platform} \\
\midrule
\multicolumn{3}{l}{{\textbf{Travel demand}}}                                              \\
{Non-MaaS}                         & 100\%                         & 44.19\%                     \\
\textit{\quad - MT \& MoD}             & 61.40\%                       & 18.57\%  \quad(-42.83\%)                    \\
{\textit{\quad - Private driving}}     & {38.60\%}      & 25.62\% \quad(-12.98\%)                     \\
\multicolumn{3}{c}{}                                                                                           \\
{MaaS}       & - & 55.81\%                                                                \\
\textit{\quad - MT \& MoD}             & -                             & 55.81\%                     \\
\midrule
\multicolumn{3}{l}{{\textbf{Multi-modal trips}}}\\
{Mode transfer per trip}     & {0.16}      & 0.18 \quad\quad\textcolor{black}{(+8.04\%)}                    \\
\textit{\quad - Non-MaaS}  & 0.16  & 0.11 \quad\quad\textcolor{black}{(-32.99\%)}\\
\textit{\quad - MaaS}  & - & 0.23 \\
\midrule
\multicolumn{3}{l}{{\textbf{Capacity utilization}}}                                                     \\
{MT}    & {44.97\%}      & 48.49\% \quad(+3.52\%)                  \\
\textit{\quad - Non-MaaS}  & 44.97\%  & 17.38\% \quad(-27.59\%)\\
\textit{\quad - MaaS}  & - & 31.11\% \\
{MoD}           & {76.17\%}    & {77.70\%} \quad(+1.53\%)                 \\
\textit{\quad - Non-MaaS}  & 76.17\%  & 11.06\% \quad(-65.11\%)\\
\textit{\quad - MaaS}  & - & 66.64\% \\
\midrule
{{\textbf{Social cost}}} & & \\
Travel time per trip (min)      & 19.02     &    16.71\quad\quad(-12.15\%) \\
Social cost per trip   & 22.83       &  20.53\quad\quad(-10.07\%)\\
\bottomrule
\end{tabular}
\end{threeparttable}
\end{table}

As expected, MaaS encourages ridership of MT and MoD. 
As reported in Table~\ref{tab:assignment_results}, the increase in capacity utilization rate is more significant for MT (3.52\%) compared to MoD (1.53\%), which is contributed by MaaS travelers (contrasting to reduction of 27.59\% and 65.11\% in non-MaaS MT and MoD ridership). This implies the MaaS platform tends to assign more traffic flows on the non-congestible MT network. 
Consequently, the average travel time in the whole system reduces by 12.15\%, and the social cost per trip, i.e., average travel time plus the operation cost equally apportioned to each trip, reduces by 10.07\%.

To better understand how MaaS helps reduce road traffic, we identify pairs of nodes that are connected by both road and transit links (e.g., (3, 4)) and then divide them into two groups based on the volume-over-capacity (VOC) of road links in the scenario without MaaS. The links with VOC $\geq 1$ are referred to as ``congested'' links while those with VOC~$<1$ are named ``non-congested''. Table~\ref{tab:congested_link_flows} compares the total flows, average and maximum VOC of these two groups of links in the scenarios with and without MaaS (for MT links, the VOC is equivalent to the capacity utilization rate). 
It can be seen that MaaS brings a more significant reduction in vehicular traffic on congested links (6.50\%) than non-congested links (1.79\%). 
Accordingly, the average VOC drops by almost 45\% to 0.76, which implies these links are no longer congested after MaaS is launched. Meanwhile, the average VOC also reduces by 31\% for non-congested links. The reduced road traffic flows are distributed to the MT network. Since MaaS also relocates other demand flows to MT, the total increase in MT flows is larger than the total decrease in car flows.

\begin{table}[H]
\caption{Traffic flows on the overlapping road and MT links.}
\label{tab:congested_link_flows}
\centering
\setlength\tabcolsep{10pt}
\begin{threeparttable}
\begin{tabular}{cccccc}
    \toprule
\multirowcell{2}{\textbf{Type of link}} &  & \multicolumn{2}{l}{\textbf{Without platform}} & \multicolumn{2}{l}{\textbf{With platform}} \\
 & & \multicolumn{1}{c}{Car} & \multicolumn{1}{c}{MT} & \multicolumn{1}{c}{Car} & \multicolumn{1}{c}{MT} \\
\midrule
Congested & total flow & 202,951.11& 249,110.41  & 189,754.06 & 269,798.38 \\
& & &  & (-6.50\%) & (+8.30\%)\\
& avg. VOC &  1.38& 0.52&  0.76& 0.56\\
& & & & (-44.93\%) & (+7.69\%) \\
& max VOC &  1.86&  1.00&  1.37& 1.00\\
& & & & (-26.34\%) & (0.00\%) \\
\midrule
Non-congested & total flow & 117,438.83& 88,130.55&  115,331.03& 93,856.80\\
& & &  & (-1.79\%) & (+6.50\%)\\
& avg. VOC &  0.84&  0.33&  0.58& 0.35\\
& & & & (-30.95\%) & (+6.06\%) \\
& max VOC & 0.98&  0.75&  0.91& 0.71\\
& & & & (-7.14\%) & (-5.33\%) \\
\bottomrule   
\end{tabular}
\end{threeparttable}
\end{table}

\subsection{Pricing results}
This section presents the main results of pricing given the optimal MaaS assignment solved in the previous section. As per Corollary~\ref{prop:existence_stable_outcome}, a pricing scheme that satisfies the stability condition is guaranteed to exist. However, as we do not specify the feasible range of MaaS trip fare $p^d$ and its relationship to the capacity purchase price $p^s$, we may end up compensating some MaaS trips (i.e.,$p^w<0$) or an overall negative platform profit. 

As illustrated in Figure~\ref{fig:MaaS_profit_to_MT_factor}, the MaaS platform profit is quite sensitive to the MT capacity price factor $\eta$. Specifically, when $\eta < 0.71$ or $\eta > 1.15$, the platform would suffer from a deficit. The maximum profit is achieved at $\eta = 0.87$ with $p^s = 0.76$, which means the platform pays unit MoD occupied time at a discount of 24\% and unit MT capacity at a discount of 34\%.

Another interesting observation in Figure~\ref{fig:MaaS_profit_to_MT_factor} is that the optimal capacity price does not change anymore as $\eta$ increases beyond 0.75. We note that this is due to the stability condition properties in this particular case study, which will be further discussed at the end of this section.

Table~\ref{tab:optimal_MaaS_OD_pricing} summarizes the main results of optimal MaaS trip fare at $\eta=0.87$. Recall the MaaS platform charges travelers solely based on their origins and destinations, regardless of modes. Hence, the statistics are computed among OD pairs. Specifically, we distinguish those with a negative value ($p^d_w<0$), meaning trips between these OD pairs are actually compensated by the platform. 
In Appendix~\ref{sec:compensated_OD_pair}, we further investigate the compensated OD pairs. In brief, we find all of them are short-distance, i.e., connected by one link, but lack of direct MT service. The MaaS platform tends to serve travelers between these OD pair with MoD service because it benefits the whole system. The platform has to compensate these travelers otherwise they would choose driving because it is convenient and not costly. 

\begin{figure}[H]
\captionsetup[subfigure]{justification=centering}
\centering
{\includegraphics[width=0.8\linewidth]{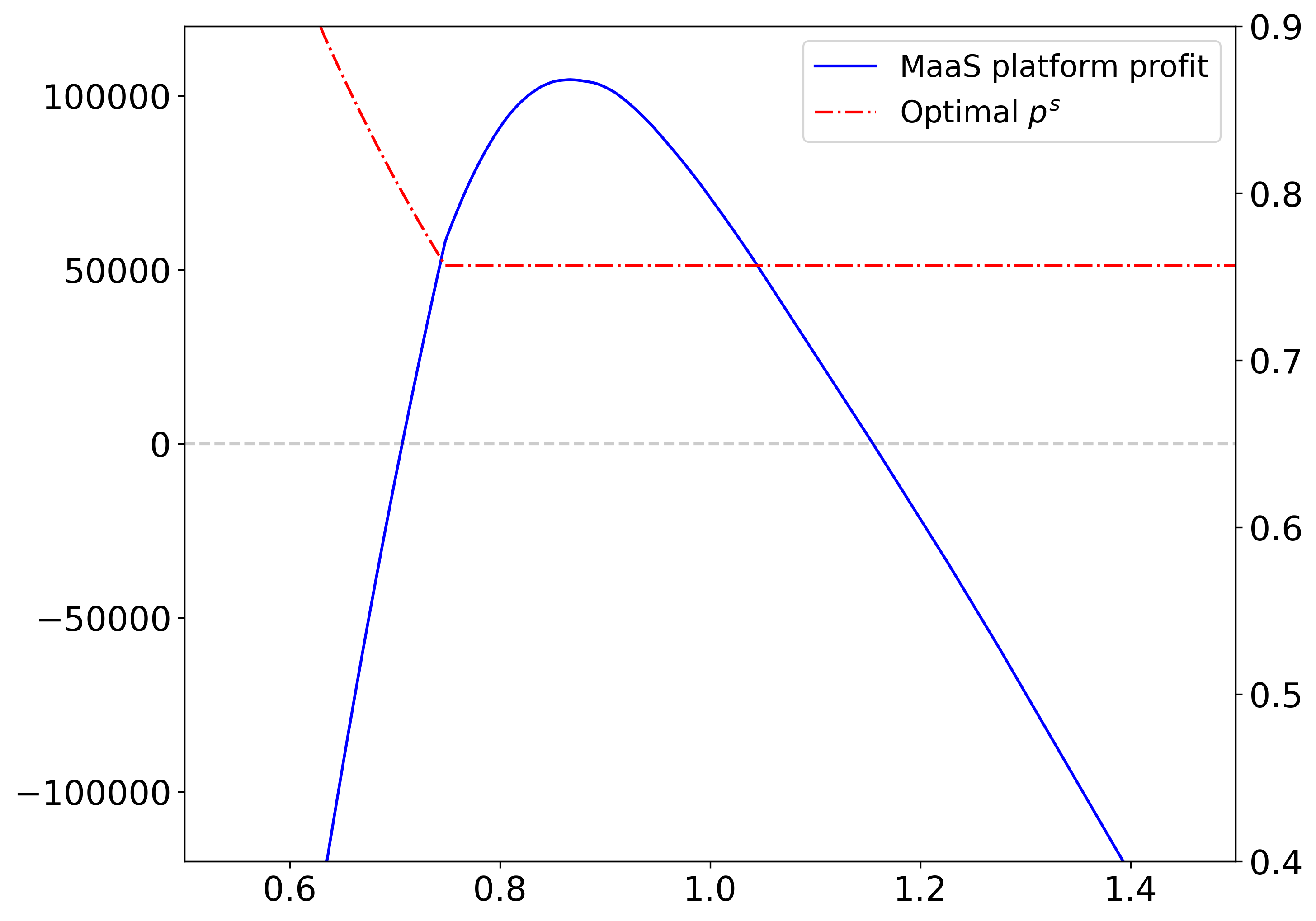}}
\caption{MaaS platform profit with for different MT capacity price factor}
\label{fig:MaaS_profit_to_MT_factor}
\end{figure}

\begin{table}[H]
\caption{Optimal OD-based MaaS pricing}\label{tab:optimal_MaaS_OD_pricing}
\label{tab:price}
\centering
\setlength\tabcolsep{30pt}
\begin{threeparttable}
\begin{tabular}{lccc}
  \toprule
  & \textbf{Min} & \textbf{Avg.} & \textbf{Max}\\
\midrule
  \multicolumn{4}{l}{\textbf{Optimal} $p^d$ } \\
  Fare  & $0.01$ & $10.11$ & $24.63$ \\
  
  Compensation  & $(0.08)$ & $(1.28)$ & $(3.08)$ \\
\midrule
 \multicolumn{4}{l}{\textbf{Fare-to-time ratio}}\\
  MaaS$^*$  & $0.01$ & $0.30$ & $0.44$ \\
  Non-MaaS  & $0.13$ & $0.37$ & $0.60$ \\
\bottomrule   
\end{tabular}
\begin{tablenotes}
\item[*]\textit{Only positive trip fares are considered.}
\end{tablenotes}
\end{threeparttable}
\end{table}

Table~\ref{tab:optimal_MaaS_OD_pricing} also reports the ratio between trip fare and travel time, namely, the fare-to-time ratio, of MaaS and non-MaaS trips, which can be interpreted as the trip fare per unit travel time. It can be found that the MaaS reduces the per-minute fare and particularly, both the lower and upper bounds drop significantly compare to non-MaaS trips.

\begin{figure}[H]
\captionsetup[subfigure]{justification=centering}
\centering
{\includegraphics[width=1\linewidth]{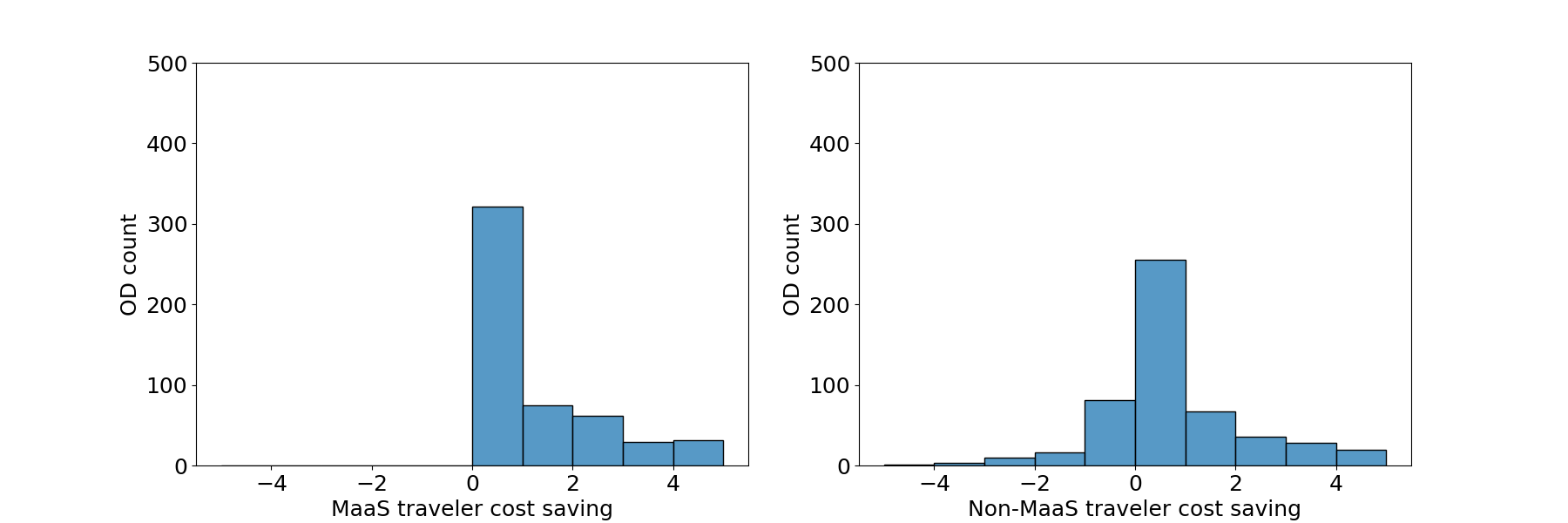}}
\caption{Traveler cost saving (compared to without MaaS platform)}\label{fig:traveler_cost_saving_contrast_withoutMaaS}
\end{figure}

In Figures \ref{fig:traveler_cost_saving_contrast_withoutMaaS}, we compare the total travel cost (the sum of travel time plus trip fare) of MaaS and non-MaaS travelers to the scenario without MaaS by OD pairs. Since we set the trip utility equal to the total travel cost in the base scenario, all MaaS travelers enjoy a non-negative cost saving due to Constraint~\eqref{eq:pricing_constraint_MaaS_traveler_nonneg_payoff_explicit}, though the saving between most OD pairs is limited. This is expected as the MaaS platform aims to maximize its profit when designing the pricing scheme.
In contrast, compared to the scenario without MaaS, some non-MaaS travelers suffer from a larger total travel cost.

\begin{figure}[H]
\captionsetup[subfigure]{justification=centering}
\centering
{\includegraphics[width=0.5\linewidth]{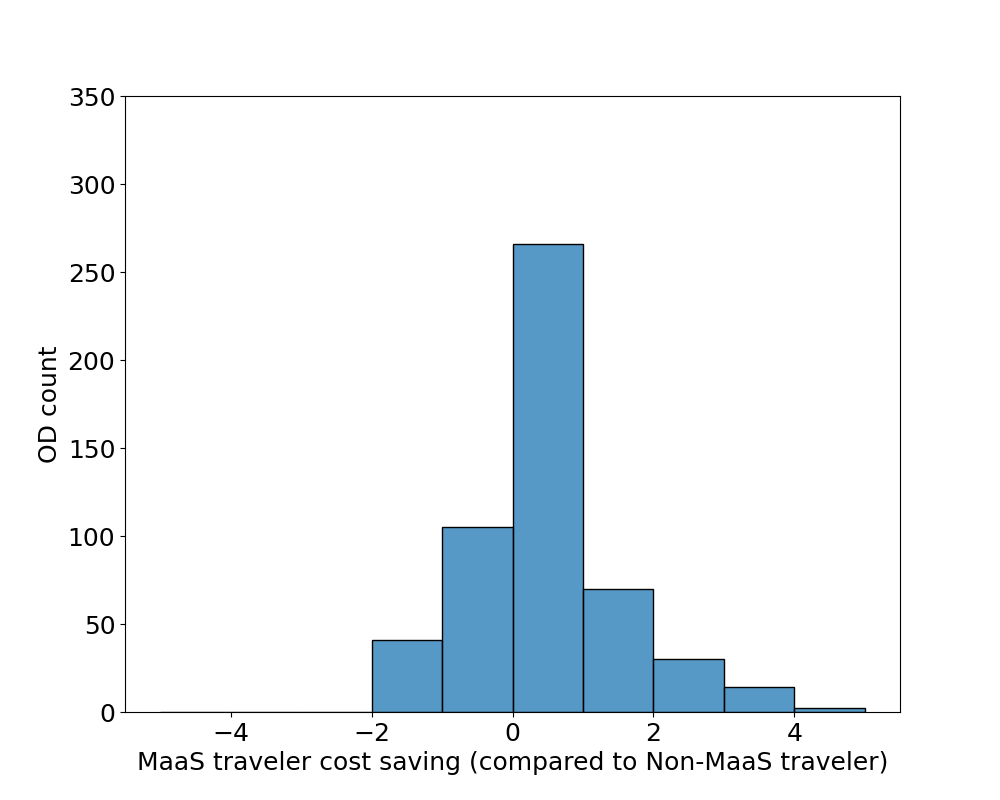}}
\caption{MaaS traveler cost saving (compared to Non-MaaS traveler)}\label{fig:traveler_cost_saving_contrast_NonMaaS}
\end{figure}

Figure \ref{fig:traveler_cost_saving_contrast_NonMaaS} further compares the total travel cost between MaaS and non-MaaS travelers by OD pairs. Interestingly, MaaS travelers do not always enjoy a lower cost than non-MaaS travelers between the same OD pair. We note this phenomenon is due to the condition of stable matching. Although these MaaS travelers get a lower pay-off, the corresponding operators receive a much higher pay-off. Hence, the matching is still stable because it yields a higher total payoff than the non-MaaS trips.

\begin{table}[H]
\caption{Illustration of stability condition for selected OD pairs}
\label{tab:illustration_stability_condition}
\centering
\setlength\tabcolsep{10pt}
\begin{threeparttable}
\begin{tabular}{cccccccc}
\toprule
\multirow{2}{*}{\textbf{OD}} & \multirow{2}{*}{\textbf{$U_w$}} & \multicolumn{3}{l}{\textbf{MaaS}}               & \multicolumn{3}{l}{\textbf{Non-MaaS}}           \\
                             &                                & Cost & Revenue & Payoff & Cost & Revenue & Payoff \\
\midrule                            
(17, 21) & 37.62 & 32.78 & 17.64 & 22.48 & 30.96 & 15.00 & 21.66 \\
(19, 21) & 27.33 & 24.84 & 14.70 & 17.19 & 23.07 & 12.00 & 16.26 \\
(21, 17) & 37.61 & 32.70 & 17.64 & 22.55 & 30.98 & 15.00 & 21.63 \\
(20, 14) & 36.81 & 36.73 & 22.04 & 22.12 & 35.01 & 18.00 & 19.80 \\
(19, 24) & 35.66 & 34.67 & 19.10 & 20.09 & 33.01 & 16.50 & 19.15 \\
(6, 22)  & 47.01 & 47.01 & 29.39 & 29.39 & 45.43 & 22.58 & 24.16\\
\bottomrule
\end{tabular}
\end{threeparttable}
\end{table}

To further demonstrate the above analysis, in Table~\ref{tab:illustration_stability_condition}, we report the trip utility, traveler costs, operator revenues and total match payoffs for selected OD pairs with the largest gap in the traveler's cost between MaaS and non-MaaS.
We first observe the traveler costs between some OD pairs are pushed to the boundary of non-negative traveler payoff (e.g., (6, 22) ). In other words, the platform squeezes all benefits from the demand side to compensate the supply side and to expand its own profit. It thus leads to a higher traveler cost compared to non-MaaS trips. Nevertheless, these MaaS trips yield much higher revenue for operators. For trips between OD pair (6, 22), the rise in operator revenue is 5.23 compared to the loss in traveler cost of 1.58. Therefore, they have no incentive to deviate from the current match and serve these travelers with non-MaaS trips. Accordingly, the matching is still stable.

So far, $\eta$ is set to optimize the MaaS platform. In Table~\ref{tab:with_vs_without_platform}, we report the system performances with $\eta$ set to optimize other stakeholders, along with the base scenario. Specifically, ``traveler-optimal'' aims to minimize the total traveler cost and ``operator-optimal'' aims to maximize the total operator revenue, while in both cases, $\eta$ is selected to ensure a non-negative MaaS platform profit. 

Compare to the base scenario without MaaS, the average travel cost is reduced in all tested cases. As expected, the maximum saving is achieved in the traveler-optimal scenario with $\eta=0.71$, while the minimum one is obtained in the operator-optimal scenario with $\eta=1.15$. 
In line with the finding in Figure~\ref{fig:MaaS_profit_to_MT_factor}, the MaaS platform gains a much higher profit in the MaaS-optimal scenario, while it breaks even in the other two cases.  
On the other hand, the MT revenue increases significantly in the MaaS-optimal and Operator-optimal cases while the MoD revenue increases in the Traveler-optimal case. 
These findings indicate that MaaS could create a win-win-win situation in the transportation system such that all major stakeholders can benefit from it.

To better understand why MoD revenue remains  the same in both the MaaS-optimal and Operator-optimal scenarios and why the optimal capacity purchase price retains to be equal when $\eta>0.84$ (see Figure~\ref{fig:MaaS_profit_to_MT_factor}), we plot the constraints and feasible region of $p^s$ under the optimal trip fare $p^d$. 
Given that the pricing problem is a linear program, it is easy to see the MaaS platform profit is maximized when $p^s$ stands at the lower bound of its feasible region. 
In both the MaaS-optimal and Operator-optimal scenarios, the MoD revenue constraint is binding, which does not change with the MT capacity price factor. This remains the case as $\eta$ keeps increasing. When $\eta<0.84$, the MT revenue constraint becomes active, thus the optimal $p^s$ decreases with $\eta$. 

\begin{table}[H]
\caption{Comparison of MaaS stakeholder benefits under different settings}
\label{tab:with_vs_without_platform}
\centering
\setlength\tabcolsep{3pt}
\begin{threeparttable}
\begin{tabular}{lcccc}
    \toprule
    & {\multirowcell{2}{\textbf{Without} \\ \textbf{platform}}} & \multicolumn{3}{l}{\textbf{\makecell[l]{With platform}}} \\
    & & \makecell[l]{Traveler-optimal} & \makecell[l]{MaaS-optimal} & \makecell[l]{Operator-optimal} \\
    \midrule
    \textbf{\makecell[l]{Value of $\eta$}}  & - & 0.71  & 0.87 & 1.15 \\
    \midrule
    \textbf{\makecell[l]{Traveler}} & & & & \\
    Avg. travel cost$^*$ &25.53 &23.98 &24.28 &24.40  \\
    & &(-6.07\%) & (-4.90\%)& (-4.40\%)\\
    \midrule
    \textbf{\makecell[l]{MaaS platform}} & & & &\\
    Profit &$-$ & 0 & 104,616.80 & 0\\
    \midrule
    \textbf{\makecell[l]{Operator}} & & & &\\
    MT revenue &638,873.65 &638,873.65 &700,791.19 &850,412.05 \\
     & &(+0.00\%)& (+9.69\%)&(+33.11\%)\\
    MoD revenue &$1.17\times10^6$ & $1.23\times10^6$  & $1.17\times10^6$ & $1.17\times10^6$\\
    & &(+4.97\%)&(0.00\%) &(0.00\%)\\
    \bottomrule   
\end{tabular}
\begin{tablenotes}
    \item[*]\textit{Average travel cost weighted by OD demand.}
\end{tablenotes}
\end{threeparttable}
\end{table}

\begin{figure}[H]
\captionsetup[subfigure]{justification=centering}
\centering
{\includegraphics[width=0.65\linewidth]{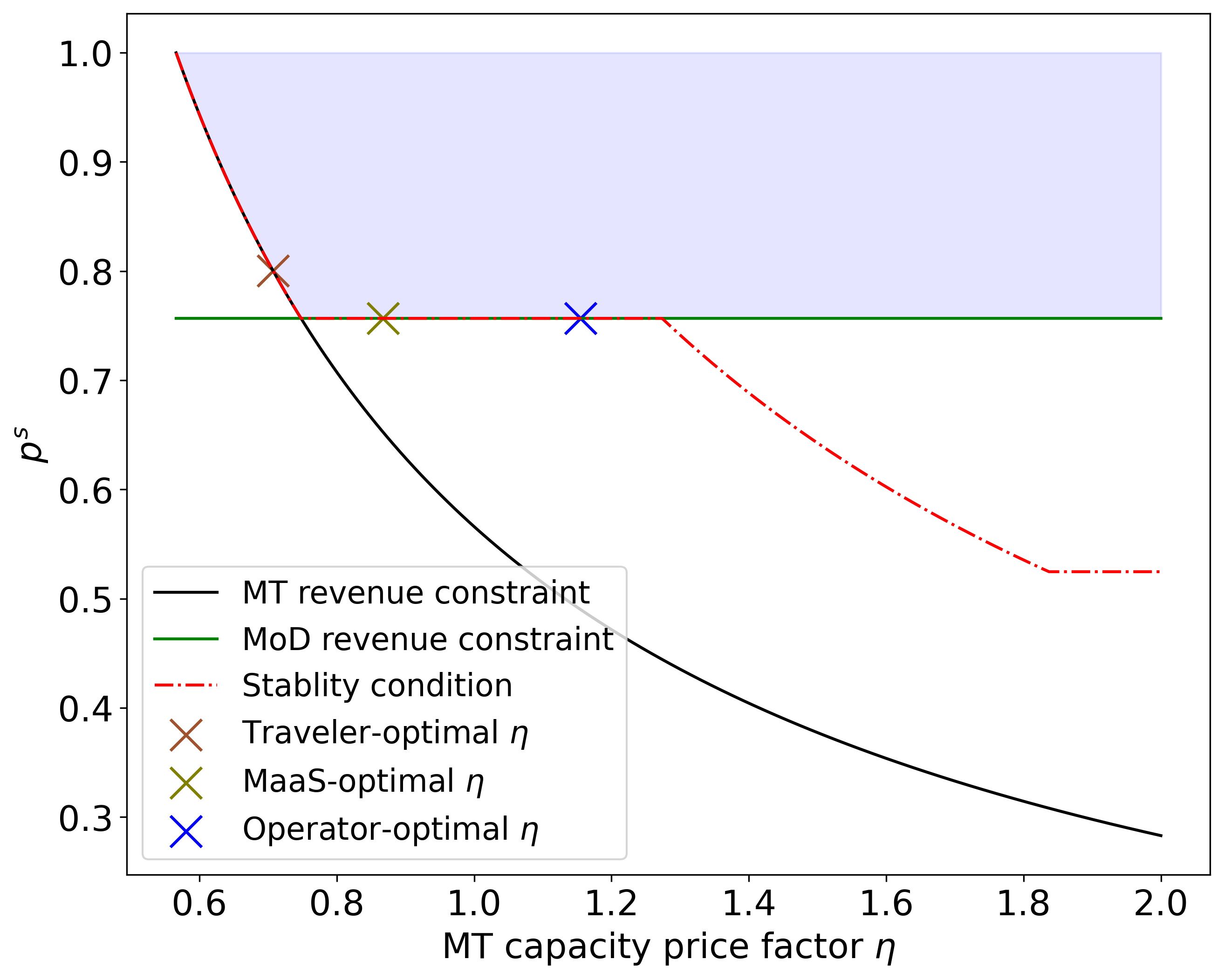}}
\caption{Active bounds for $p^s$ (shaded area corresponds to the feasible region)}\label{fig.active_bound_p_s}
\end{figure}

In addition, under the optimal assignment, the increases in operator revenues are captured by the gap between the feasible region lower bound and the operator revenue constraints. In the traveler-optimal scenario, the optimal price $p^s$ is higher than the MoD revenue constraint, which results in a 4.97\% increases in MoD revenue. In contrast, the MT revenue constraint is binding in this scenario, as also indicated in the same revenue in Table~\ref{tab:with_vs_without_platform}. Similarly, MT revenue is higher (+33.11\% vs +9.69\%) in the Operator-optimal scenario compared to the MaaS-optimal scenario, which is illustrated in Figure~\ref{fig.active_bound_p_s} with larger gap between the the feasible region lower bound and MT revenue constraint.

The more restrictive MoD revenue constraint is due to the optimal assignment outcomes. 
As shown in Table~\ref{tab:assignment_results}, the reduction in non-MaaS MoD trip flows (-65.11\%) is more significant than non-MaaS MT trip flows (-27.59\%). Therefore, the MoD operator requires a large payoff from the MaaS platform to guarantee its revenue, which leads to a higher bound on $p^s$.

\section{Discussion}
\subsection{Non-uniqueness of MaaS many-to-many matching} 
MaaS and non-MaaS travelers share the same physical links in the multi-modal transportation network (e.g., roads and transit lines). Besides, some MoD vehicles (i.e., e-hailing vehicles) and private cars also share the same road network. Hence, Eq.~\eqref{eq:lower_simp} essentially describes a multi-class and multi-modal equilibrium with link interaction, whose uniqueness is in general not guaranteed~\citep{bar2012user, florian2014uniqueness}. 
Nevertheless, we have proved in Proposition~\ref{prop:existence_assignment} that there always exists an optimal MaaS assignment $(q^*,x^*)$, from which we can always obtain a stable outcome $(q^*,x^*,p^*)$ as per Proposition~\ref{prop:existence_stable_outcome}. Further, Corollary~\ref{coro:unique_optimal_pricing} ensures that there is a unique optimal pricing scheme when the optimal assignment satisfies some mild conditions.

When solving the assignment problem, we set the initial solution as one equilibrium in the base scenario without MaaS to mimic the evolution after MaaS is launched in an existing transportation system. This yields one optimal assignment with its corresponding unique pricing scheme. Alternatively, multi-class and (or) multi-modal proportionality assumptions can be imposed to obtain unique link flows for each class and (or) mode~\citep{bar2012user,florian2014uniqueness}. We leave such investigation for future research.

\subsection{Connection to supply chain equilibrium} 
As mentioned in Section~\ref{sec:review}, this study is closely related to the supply chain equilibrium models~\citep[e.g.,][]{nagurney2002supply,nagurney2006relationship}, in which a profit-driven intermediary buys products from manufactures at a wholesale price and sells them to customers at a retail price. Specifically, the MaaS platform modeled in this study plays the role of intermediary in the multi-modal transportation system. However, our model differs from the supply-chain models in two main aspects.

First, instead of elastic supply and demand in supply-chain models, transport capacities and travel demands are considered exogenous in this paper. Correspondingly, operators and travelers are assumed to decide the proportion of capacities and demands using MaaS. 
Second, we adopt stable matching as the concept of stability, instead of classic user equilibrium considered in supply-chain models. In other words, operators and travelers have no incentive to deviate from their current match because they cannot form another match that yields a higher total payoff, rather than a higher individual utility as stated in user equilibrium. Therefore, the stability condition can be interpreted as a relaxation of user equilibrium. Future research could perform an in-depth analysis of the connection between the stability condition of the MaaS model and the equilibrium condition in the supply-chain model.

\section{Conclusion}
How would a MaaS platform survive as an intermediary in a multi-modal transportation system? Or specifically, how does the platform match demand with supply and price on both sides such that both travelers and service operators are willing to participate, meanwhile its own profit is also maximized? 
To answer this question, we adapt the analytical framework of many-to-many stable matching and decompose the complex interactions among travelers, service operators and the MaaS platform into two subproblems.
The first subproblem tackles the optimal assignment of MaaS travel demand and flows. The assignment model captures the route choice of both MaaS and non-MaaS travelers in a congestible multi-modal network subject to the service capacities. We prove that, under mild conditions, the MaaS platform clears the market at equilibrium, i.e., its purchased capacity equals its served travel flows. This result enables us to reformulate the assignment problem as a bilevel program and solve it with a sensitivity-based algorithm. 
The second subproblem regards designing the optimal pricing scheme that ensures the participation of travelers and service operators in MaaS and maximizes the platform's profit. The former is dictated by the stability condition derived from the optimal assignment outcomes. With a simplified but also more practical pricing scheme, we explicitly characterize the stability conditions and show the optimal pricing problem can be reduced to a linear program.We further prove that the resulting problem is always feasible and guarantees a unique optimal solution given each assignment satisfying some mild conditions.

The proposed model is tested on a hypothetical multi-modal transport system based on the Sioux Falls network and compared with a based scenario without MaaS.  The main findings from numerical experiments are summarized as follows:
\begin{itemize}
    \item MaaS has the potential to substitute private driving with public transport even if MaaS travelers are given full freedom in mode and route choice. Specifically, MaaS promotes more multi-modal trips and MT ridership. Accordingly, the capacity utilization of MT increases significantly meanwhile the congestion on the road network is mitigated. These results demonstrate the key benefits of introducing MaaS to an existing transportation system. 

    \item The MaaS platform is able to achieve a positive profit under well-designed pricing schemes. Not only MaaS travelers are better off compared to the base scenario, but also a large fraction of non-MaaS travelers can benefit as MaaS improves the overall system efficiency. Therefore, the average travel cost is reduced. On the other hand, the service operators also enjoy a higher revenue compared to the scenario without MaaS. Consequently, the launch of a MaaS platform creates a win-win-win situation.  
\end{itemize}

There are several directions along which the current model can be extended. First, in this study, we assume the service capacities are exogenous and the non-MaaS fares are fixed. Future research could incorporate the operators' decisions into the model, which could provide additional insights into the cooperation and competition between the MaaS platform and operators. Similarly, the assumption of fixed total demand can be relaxed to analyze the robustness of a MaaS system toward the future growth of travel demand. 
Besides, the numerical results indicate that the performance of MaaS depends on how much the road network and MT network overlap with each other. Hence, it would be interesting to explore the impact of topological properties of the multi-modal transportation network on the social cost saving and profitability of MaaS. 
Finally, as mentioned in the discussion, there exists a connection between matching stability and supply-chain equilibrium. Future research can look into their difference and investigate which one fits better with real practice.

\newpage 
\bibliographystyle{apalike} 
\bibliography{mybib}

\begin{thebibliography}{}

\bibitem[Auslender and Teboulle, 2003]{auslender2003maximal}
Auslender, A. and Teboulle, M. (2003).
\newblock Maximal monotone maps and variational inequalities.
\newblock {\em Asymptotic Cones and Functions in Optimization and Variational
  Inequalities}, pages 183--232.

\bibitem[Bar-Gera, 2016]{bargeraNetwork}
Bar-Gera (2016).
\newblock Transportation networks for research.

\bibitem[Bar-Gera et~al., 2012]{bar2012user}
Bar-Gera, H., Boyce, D., and Nie, Y.~M. (2012).
\newblock User-equilibrium route flows and the condition of proportionality.
\newblock {\em Transportation Research Part B: Methodological}, 46(3):440--462.

\bibitem[Chen and Nie, 2017]{chen2017connecting}
Chen, P.~W. and Nie, Y.~M. (2017).
\newblock Connecting e-hailing to mass transit platform: Analysis of relative
  spatial position.
\newblock {\em Transportation Research Part C: Emerging Technologies},
  77:444--461.

\bibitem[Chen and Di, 2021]{chen2021ridesharing}
Chen, X. and Di, X. (2021).
\newblock Ridesharing user equilibrium with nodal matching cost and its
  implications for congestion tolling and platform pricing.
\newblock {\em Transportation Research Part C: Emerging Technologies},
  129:103233.

\bibitem[Cobb and Douglas, 1928]{cobb1928theory}
Cobb, C.~W. and Douglas, P.~H. (1928).
\newblock A theory of production.

\bibitem[Davis and Yin, 2017]{davis2017three}
Davis, D. and Yin, W. (2017).
\newblock A three-operator splitting scheme and its optimization applications.
\newblock {\em Set-valued and variational analysis}, 25:829--858.

\bibitem[De~Cea and Fern{\'a}ndez, 1993]{de1993transit}
De~Cea, J. and Fern{\'a}ndez, E. (1993).
\newblock Transit assignment for congested public transport systems: an
  equilibrium model.
\newblock {\em Transportation science}, 27(2):133--147.

\bibitem[Deng et~al., 2022]{deng2022incentive}
Deng, Y., Shao, S., Mittal, A., Twumasi-Boakye, R., Fishelson, J., Gupta, A.,
  and Shroff, N.~B. (2022).
\newblock Incentive design and profit sharing in multi-modal transportation
  networks.
\newblock {\em Transportation Research Part B: Methodological}, 163:1--21.

\bibitem[Di and Ban, 2019]{di2019unified}
Di, X. and Ban, X.~J. (2019).
\newblock A unified equilibrium framework of new shared mobility systems.
\newblock {\em Transportation Research Part B: Methodological}, 129:50--78.

\bibitem[Ding et~al., 2023]{ding2023mechanism}
Ding, X., Qi, Q., Jian, S., and Yang, H. (2023).
\newblock Mechanism design for mobility-as-a-service platform considering
  travelers’ strategic behavior and multidimensional requirements.
\newblock {\em Transportation Research Part B: Methodological}, 173:1--30.

\bibitem[Facchinei and Pang, 2003]{facchinei2003finite}
Facchinei, F. and Pang, J.-S. (2003).
\newblock {\em Finite-dimensional variational inequalities and complementarity
  problems}.
\newblock Springer.

\bibitem[Florian and Morosan, 2014]{florian2014uniqueness}
Florian, M. and Morosan, C.~D. (2014).
\newblock On uniqueness and proportionality in multi-class equilibrium
  assignment.
\newblock {\em Transportation Research Part B: Methodological}, 70:173--185.

\bibitem[Gale and Shapley, 1962]{gale1962college}
Gale, D. and Shapley, L.~S. (1962).
\newblock College admissions and the stability of marriage.
\newblock {\em The American Mathematical Monthly}, 69(1):9--15.

\bibitem[Harker and Pang, 1988]{harker1988existence}
Harker, P.~T. and Pang, J.-S. (1988).
\newblock Existence of optimal solutions to mathematical programs with
  equilibrium constraints.
\newblock {\em Operations research letters}, 7(2):61--64.

\bibitem[Jittrapirom et~al., 2017]{jittrapirom2017mobility}
Jittrapirom, P., Caiati, V., Feneri, A.-M., Ebrahimigharehbaghi, S.,
  Alonso~Gonz{\'a}lez, M.~J., and Narayan, J. (2017).
\newblock Mobility as a service: A critical review of definitions, assessments
  of schemes, and key challenges.

\bibitem[Kamargianni et~al., 2016]{kamargianni2016critical}
Kamargianni, M., Li, W., Matyas, M., and Sch{\"a}fer, A. (2016).
\newblock A critical review of new mobility services for urban transport.
\newblock {\em Transportation Research Procedia}, 14:3294--3303.

\bibitem[Kanzow, 2016]{kanzow2016multiplier}
Kanzow, C. (2016).
\newblock On the multiplier-penalty-approach for quasi-variational
  inequalities.
\newblock {\em Mathematical Programming}, 160:33--63.

\bibitem[Lam et~al., 2002]{LAM2002919}
Lam, W.~H., Zhou, J., and han Sheng, Z. (2002).
\newblock A capacity restraint transit assignment with elastic line frequency.
\newblock {\em Transportation Research Part B: Methodological},
  36(10):919--938.

\bibitem[Li et~al., 2022]{li2022differentiable}
Li, J., Yu, J., Wang, Q., Liu, B., Wang, Z., and Nie, Y.~M. (2022).
\newblock Differentiable bilevel programming for stackelberg congestion games.
\newblock {\em arXiv preprint arXiv:2209.07618}.

\bibitem[Liu and Chow, 2023]{liu2023demand}
Liu, B. and Chow, J.~Y. (2023).
\newblock On-demand mobility-as-a-service platform assignment games with
  guaranteed stable outcomes.
\newblock {\em arXiv preprint arXiv:2305.00818}.

\bibitem[Liu et~al., 2023]{liu2023end}
Liu, Z., Yin, Y., Bai, F., and Grimm, D.~K. (2023).
\newblock End-to-end learning of user equilibrium with implicit neural
  networks.
\newblock {\em Transportation Research Part C: Emerging Technologies},
  150:104085.

\bibitem[Luo et~al., 2021a]{luo2021optimal}
Luo, Q., Li, S., and Hampshire, R.~C. (2021a).
\newblock Optimal design of intermodal mobility networks under uncertainty:
  Connecting micromobility with mobility-on-demand transit.
\newblock {\em EURO Journal on Transportation and Logistics}, 10:100045.

\bibitem[Luo et~al., 2021b]{luo2021multimodal}
Luo, Q., Samaranayake, S., and Banerjee, S. (2021b).
\newblock Multimodal mobility systems: joint optimization of transit network
  design and pricing.
\newblock In {\em Proceedings of the ACM/IEEE 12th International Conference on
  Cyber-Physical Systems}, pages 121--131.

\bibitem[Mohiuddin, 2021]{mohiuddin2021planning}
Mohiuddin, H. (2021).
\newblock Planning for the first and last mile: A review of practices at
  selected transit agencies in the united states.
\newblock {\em Sustainability}, 13(4):2222.

\bibitem[Nagurney, 2006]{nagurney2006relationship}
Nagurney, A. (2006).
\newblock On the relationship between supply chain and transportation network
  equilibria: A supernetwork equivalence with computations.
\newblock {\em Transportation Research Part E: Logistics and Transportation
  Review}, 42(4):293--316.

\bibitem[Nagurney et~al., 2002]{nagurney2002supply}
Nagurney, A., Dong, J., and Zhang, D. (2002).
\newblock A supply chain network equilibrium model.
\newblock {\em Transportation Research Part E: Logistics and Transportation
  Review}, 38(5):281--303.

\bibitem[Nie et~al., 2004]{nie2004models}
Nie, Y., Zhang, H., and Lee, D.-H. (2004).
\newblock Models and algorithms for the traffic assignment problem with link
  capacity constraints.
\newblock {\em Transportation Research Part B: Methodological}, 38(4):285--312.

\bibitem[Pantelidis et~al., 2020]{pantelidis2020many}
Pantelidis, T.~P., Chow, J.~Y., and Rasulkhani, S. (2020).
\newblock A many-to-many assignment game and stable outcome algorithm to
  evaluate collaborative mobility-as-a-service platforms.
\newblock {\em Transportation Research Part B: Methodological}, 140:79--100.

\bibitem[Parikh et~al., 2014]{parikh2014proximal}
Parikh, N., Boyd, S., et~al. (2014).
\newblock Proximal algorithms.
\newblock {\em Foundations and trends{\textregistered} in Optimization},
  1(3):127--239.

\bibitem[Patriksson, 2004]{patriksson2004sensitivity}
Patriksson, M. (2004).
\newblock Sensitivity analysis of traffic equilibria.
\newblock {\em Transportation Science}, 38(3):258--281.

\bibitem[Pinto et~al., 2020]{pinto2020joint}
Pinto, H.~K., Hyland, M.~F., Mahmassani, H.~S., and Verbas, I.~{\"O}. (2020).
\newblock Joint design of multimodal transit networks and shared autonomous
  mobility fleets.
\newblock {\em Transportation Research Part C: Emerging Technologies},
  113:2--20.

\bibitem[Rasulkhani and Chow, 2019]{rasulkhani2019route}
Rasulkhani, S. and Chow, J.~Y. (2019).
\newblock Route-cost-assignment with joint user and operator behavior as a
  many-to-one stable matching assignment game.
\newblock {\em Transportation Research Part B: Methodological}, 124:60--81.

\bibitem[Reck et~al., 2020]{reck2020maas}
Reck, D.~J., Hensher, D.~A., and Ho, C.~Q. (2020).
\newblock Maas bundle design.
\newblock {\em Transportation Research Part A: Policy and Practice},
  141:485--501.

\bibitem[Shaheen and Chan, 2016]{shaheen2016mobility}
Shaheen, S. and Chan, N. (2016).
\newblock Mobility and the sharing economy: Potential to facilitate the
  first-and last-mile public transit connections.
\newblock {\em Built Environment}, 42(4):573--588.

\bibitem[Sotomayor, 1992]{sotomayor1992multiple}
Sotomayor, M. (1992).
\newblock The multiple partners game.
\newblock In {\em Equilibrium and dynamics}, pages 322--354. Springer.

\bibitem[Stiglic et~al., 2018]{stiglic2018enhancing}
Stiglic, M., Agatz, N., Savelsbergh, M., and Gradisar, M. (2018).
\newblock Enhancing urban mobility: Integrating ride-sharing and public
  transit.
\newblock {\em Computers \& Operations Research}, 90:12--21.

\bibitem[Sutskever et~al., 2013]{sutskever2013importance}
Sutskever, I., Martens, J., Dahl, G., and Hinton, G. (2013).
\newblock On the importance of initialization and momentum in deep learning.
\newblock In {\em International conference on machine learning}, pages
  1139--1147. PMLR.

\bibitem[Szeto and Jiang, 2014]{szeto2014transit}
Szeto, W. and Jiang, Y. (2014).
\newblock Transit assignment: Approach-based formulation, extragradient method,
  and paradox.
\newblock {\em Transportation Research Part B: Methodological}, 62:51--76.

\bibitem[Tobin and Friesz, 1988]{tobin1988sensitivity}
Tobin, R.~L. and Friesz, T.~L. (1988).
\newblock Sensitivity analysis for equilibrium network flow.
\newblock {\em Transportation Science}, 22(4):242--250.

\bibitem[Tseng, 2001]{tseng2001convergence}
Tseng, P. (2001).
\newblock Convergence of a block coordinate descent method for
  nondifferentiable minimization.
\newblock {\em Journal of optimization theory and applications}, 109:475--494.

\bibitem[van~den Berg et~al., 2022]{van2022business}
van~den Berg, V.~A., Meurs, H., and Verhoef, E.~T. (2022).
\newblock Business models for mobility as an service (maas).
\newblock {\em Transportation Research Part B: Methodological}, 157:203--229.

\bibitem[Wang and Ross, 2019]{wang2019new}
Wang, F. and Ross, C.~L. (2019).
\newblock New potential for multimodal connection: Exploring the relationship
  between taxi and transit in new york city (nyc).
\newblock {\em Transportation}, 46(3):1051--1072.

\bibitem[Wang and Odoni, 2016]{wang2016approximating}
Wang, H. and Odoni, A. (2016).
\newblock Approximating the performance of a “last mile” transportation
  system.
\newblock {\em Transportation Science}, 50(2):659--675.

\bibitem[Yang et~al., 2019]{yang2019survey}
Yang, T., Yi, X., Wu, J., Yuan, Y., Wu, D., Meng, Z., Hong, Y., Wang, H., Lin,
  Z., and Johansson, K.~H. (2019).
\newblock A survey of distributed optimization.
\newblock {\em Annual Reviews in Control}, 47:278--305.

\bibitem[Yao and Bekhor, 2023]{yao2022general}
Yao, R. and Bekhor, S. (2023).
\newblock A general equilibrium model for multi-passenger ridesharing systems
  with stable matching.
\newblock {\em Transportation Research Part B: Methodological}, 175:102775.

\bibitem[Zheng et~al., 2021]{zheng2021many}
Zheng, H., Zhang, K., Nie, M., Yan, P., and Qu, Y. (2021).
\newblock How many are too many? analyzing dockless bikesharing systems with a
  parsimonious model.
\newblock {\em Analyzing Dockless Bikesharing Systems with a Parsimonious Model
  (September 8, 2021)}.

\end{thebibliography}

\newpage
\appendix

\section{Proof of Lemma~\ref{lemma:market-clearance}}\label{proof:market_clearance}
\begin{proof}
    Note that the capacity purchase price is positive for links with MaaS flows. Let $k_a$ denote the purchased service capacity with feasible price $p_a^{s*}$, and $x_a$ denote the realized MaaS travel flows on link $a$ with feasible link fare $p_{a}^{d*}$, which is distributed from the OD-based MaaS fares.

    The MaaS platform profit maximization problem takes the following abstract form:
    \begin{subequations}
        \begin{align}
            \max_{k_a}\quad & \sum_{a\in \A_r} \left(p_{a}^{d*}x_a - p_a^{s*}k_a \right),\\
            \text{s.t.}\quad & x_a \leq k_a, \; \forall a \in \mathcal{A}_r
        \end{align}
    The corresponding complementarity conditions are:
    \begin{align}
        0 \leq p_a^{s*} - \mu_a \perp k_a \geq 0 \label{eq:comp_market_clearance_k}\\
        0 \leq k_a - x_a \perp \mu_a \geq 0, \label{eq:comp_market_clearance_mu}
    \end{align}
    where $x_a > 0$ implies $k_a > 0$, such that through complementarity condition~\eqref{eq:comp_market_clearance_k}:
    \begin{align}
        \mu_a = p_a^{s*} > 0 
    \end{align}
    consequently, we have by complementarity condition~\eqref{eq:comp_market_clearance_mu}:
    \begin{align}
        k_a = x_a,
    \end{align}
    we conclude that, if MaaS platform aims to maximize its profit, the market is cleared at optimal.
    \end{subequations}
\end{proof}

\section{Proof of Lemma~\ref{lemma:MaaS_UE_cost}}\label{proof:MaaS_UE_cost}
\begin{proof}
    Under the assumptions of Proposition~\ref{prop:existence_assignment}, the complemenarity conditions for the lower-level problem~\eqref{eq:lower_simp_whole} are derived for MaaS travelers as follows:
    \begin{subequations}
        \begin{align}
            0 \leq \left[\pi_{js} + t_{ij} + {\mu}_{ij} - \pi_{is} \right] &\perp x_{ij}^w \geq 0, \; \forall (i, j)\in\A, w \in \W\\
            0 \leq [A_i x^{w} - b_i^w ]&\perp \pi_{is} \geq 0, \forall i \in \N, w \in \W \\
            0 \leq \left[K_a - \sum_{w\in\W} (x_a^w + \tilde{x}_a^w) \right] &\perp {\mu}_a \geq 0,\; \forall a \in \A
        \end{align}     
    where, $A_i$ denote the $i$-th row of the node-link adjacency matrix $A$.

    At optimal, for links with positive flows $x_{ij}^w > 0$, we have:
    \begin{align}
        \pi_{is} = \pi_{js} + t_{ij} + {\mu}_{ij} \label{eq:prop_recursive_ue_cost}.
    \end{align}
    Therefore, for path in the optimal path set $r \in {\R}_w^*$, the UE cost $\pi_w$ can be obtained by setting $\pi_{ss}=0$ and recursively summing over all incident links with Eq.~\eqref{eq:prop_recursive_ue_cost}:
    \begin{align}
       \pi_{w} = \sum_{a \in r} (t_a^* + {\mu}_a^*)
    \end{align}
    \end{subequations}
\end{proof}

\section{Detailed experiment setup}\label{sec:SF_link_parameter}

The road network free-flow travel times are set to be the same as Sioux Falls benchmark~\citep{bargeraNetwork}, while we reduce the link capacity by 25\% to avoid modeling the exogenous vehicular traffic. 
The MT link travel time $T_a$ is set to be 1.6 times the road link free-flow travel time. 
For simplicity, we assume link-based non-MaaS MoD fare $\tilde{p}_a = T_a$, private driving cost $\tilde{p}_a = 1.8T_a$, and MT fare $\tilde{p}_a = 0.5 T_a$
The values of regular link parameters are reported in Table~\ref{tab:SF_link_parameter}.

For MoD, the matching coefficient in Eq.~\eqref{eq:ta_MoD_access} is set to $\kappa_m = 26.78 \text{km}^2/\text{min}$, 
and the minimum vacant vehicle time is set to $\underline{V}_m = 0.5$ minutes. Since MoD provides door-to-door trips, we simply set the egress time as zero.
For MT, the access and egress times are set as 1.25 and 0.25 minutes, respectively. 
As mentioned in Section~\ref{subsec:results_setup}, we simplify the network by merging transfer links and access links. Hence, an extra transfer time of $T_0 = 1.0$ minute is added to all access links. Besides, a planning cost $P_0$ equivalent to $2.5$ minutes is added to each access link in a non-MaaS multi-modal trip to capture the extra cost in self-planned mode transfers. 

Regarding the exogenous service capacity, we set the MoD total vehicle time as $2 \times 10^6$ minutes and assume  each MT line can serve up to $15,000$ passengers. These values are chosen such that the MoD and MT service capacities are not fully saturated in the base scenario without MaaS.

The parameters in Algorithm~\ref{alg:assignment} are well-tuned and set as follows: 

\begin{itemize}
    \item Step sizes: $\alpha = 2.5\times 10^{-4}, \gamma = 2.5\times 10^{-4}, \beta=0.2, N=200$;
    \item Augmented Lagrangian parameters: $\rho = 1$ (with additional upper bound of $\rho^{(k)} \leq 200$), $\phi=0.1, \sigma=0.85$;
    \item Stopping thresholds $\epsilon_{q}=\epsilon_{z}=1\times 10^{-7}$.
\end{itemize}
The algorithm is implemented in Python and the backward propagation is performed by PyTorch. Finally, the pricing problem is solved by Gurobi.

\begin{longtable}{cccccccc}
\caption{Parameter values of regular links (bidirectional)}\label{tab:SF_link_parameter}\\ \toprule
\multirow{2}{*}{\textbf{Link}} & \multicolumn{4}{l}{\textbf{Road network}}                                                                                                                                 & \multicolumn{3}{l}{\textbf{MT network}}                                                        \\
                      & $T_a$ & $K_a $    &  $\rho_a^{\text{Car}}$ & $\rho_a^{\text{MoD}}$ & $T_a$ & $K_a$  & $\rho_a^{\text{MT}}$ \\
                      \midrule
\endfirsthead
\caption{Parameter values of regular links (bidirectional) (Cont.)}\\
\toprule
\multirow{2}{*}{\textbf{Link}} & \multicolumn{4}{l}{\textbf{Road network}}                                                                                                                                 & \multicolumn{3}{l}{\textbf{MT network}}                                                        \\
                      & $T_a$ & $K_a $    &  $\rho_a^{\text{Car}}$ & $\rho_a^{\text{MoD}}$ & $T_a$ & $K_a$  & $\rho_a^{\text{MT}}$ \\
                      \midrule
\endhead
\hline
\multicolumn{8}{l}{continued on next page}\\   \bottomrule
\endfoot
\bottomrule
\endlastfoot
(1, 2)   & 6.00  & 19425.15 & 10.80 & 6.00  & -    & -        & -    \\
(1, 3)   & 4.00  & 17552.60 & 7.20  & 4.00  & -    & -        & -    \\
(2, 6)   & 5.00  & 3718.64  & 9.00  & 5.00  & 8.00 & 15000.00 & 2.50 \\
(3, 4)   & 4.00  & 12832.89 & 7.20  & 4.00  & 6.40 & 15000.00 & 2.00 \\
(3, 12)  & 4.00  & 17552.60 & 7.20  & 4.00  & -    & -        & -    \\
(4, 5)   & 2.00  & 13337.10 & 3.60  & 2.00  & 3.20 & 15000.00 & 1.00 \\
(4, 11)  & 6.00  & 3681.62  & 10.80 & 6.00  & 9.60 & 15000.00 & 3.00 \\
(5, 6)   & 4.00  & 3711.00  & 7.20  & 4.00  & 6.40 & 15000.00 & 2.00 \\
(5, 9)   & 5.00  & 7500.00  & 9.00  & 5.00  & 8.00 & 15000.00 & 2.50 \\
(6, 8)   & 2.00  & 3673.94  & 3.60  & 2.00  & 3.20 & 15000.00 & 1.00 \\
(7, 8)   & 3.00  & 5881.36  & 5.40  & 3.00  & -    & -        & -    \\
(7, 18)  & 2.00  & 17552.60 & 3.60  & 2.00  & -    & -        & -    \\
(8, 9)   & 10.00 & 3787.64  & 18.00 & 10.00 & -    & -        & -    \\
(8, 16)  & 5.00  & 3784.37  & 9.00  & 5.00  & 8.00 & 15000.00 & 2.50 \\
(9, 10)  & 3.00  & 10436.84 & 5.40  & 3.00  & 4.80 & 15000.00 & 1.50 \\
(10, 11) & 5.00  & 7500.00  & 9.00  & 5.00  & 8.00 & 15000.00 & 2.50 \\
(10, 15) & 6.00  & 10134.00 & 10.80 & 6.00  & 9.60 & 15000.00 & 3.00 \\
(10, 16) & 4.00  & 3641.19  & 7.20  & 4.00  & 6.40 & 15000.00 & 2.00 \\
(10, 17) & 8.00  & 3745.13  & 14.40 & 8.00  & -    & -        & -    \\
(11, 12) & 6.00  & 3681.62  & 10.80 & 6.00  & 9.60 & 15000.00 & 3.00 \\
(11, 14) & 4.00  & 3657.38  & 7.20  & 4.00  & 6.40 & 15000.00 & 2.00 \\
(12, 13) & 3.00  & 19425.15 & 5.40  & 3.00  & -    & -        & -    \\
(13, 24) & 4.00  & 3818.44  & 7.20  & 4.00  & 6.40 & 15000.00 & 2.00 \\
(14, 15) & 5.00  & 3845.64  & 9.00  & 5.00  & -    & -        & -    \\
(14, 23) & 4.00  & 3693.59  & 7.20  & 4.00  & 6.40 & 15000.00 & 2.00 \\
(15, 19) & 3.00  & 10923.56 & 5.40  & 3.00  & -    & -        & -    \\
(15, 22) & 3.00  & 7199.39  & 5.40  & 3.00  & 4.80 & 15000.00 & 1.50 \\
(16, 17) & 2.00  & 3922.43  & 3.60  & 2.00  & 3.20 & 15000.00 & 1.00 \\
(16, 18) & 3.00  & 14759.92 & 5.40  & 3.00  & 4.80 & 15000.00 & 1.50 \\
(17, 19) & 2.00  & 3617.96  & 3.60  & 2.00  & 3.20 & 15000.00 & 1.00 \\
(18, 20) & 4.00  & 17552.60 & 7.20  & 4.00  & -    & -        & -    \\
(19, 20) & 4.00  & 3751.96  & 7.20  & 4.00  & 6.40 & 15000.00 & 2.00 \\
(20, 21) & 6.00  & 3794.93  & 10.80 & 6.00  & 9.60 & 15000.00 & 3.00 \\
(20, 22) & 5.00  & 3806.77  & 9.00  & 5.00  & -    & -        & -    \\
(21, 22) & 2.00  & 3922.43  & 3.60  & 2.00  & 3.20 & 15000.00 & 1.00 \\
(21, 24) & 3.00  & 3664.02  & 5.40  & 3.00  & 4.80 & 15000.00 & 1.50 \\
(22, 23) & 4.00  & 3750.00  & 7.20  & 4.00  & -    & -        & -    \\
(23, 24) & 2.00  & 3808.88  & 3.60  & 2.00  & 3.20 & 15000.00 & 1.00 \\
\end{longtable}

\section{Performance of Algorithm~\ref{alg:assignment}}\label{sec:algo_performance}

The algorithm converges with both gaps below the stopping thresholds: $g_q = 9.56 \times 10^{-8}$ and $g_z = 9.97 \times 10^{-8}$. 
Figures~\ref{fig:outer_convergence} and \ref{fig:inner_convergence} plot the two gaps against the number of iterations. For a better illustration, the results are also smoothed by exponential averaging.  

As can be seen in the plots, both gaps drop significantly within the first several hundreds of iterations, then continue to reduce approximately linearly until convergence. A lot of oscillations, however, are observed due to the coordinated descent feature of the algorithm~\citep{tseng2001convergence}. 

\begin{figure}[H]
\captionsetup[subfigure]{justification=centering}
\centering
{\includegraphics[width=0.8\linewidth]{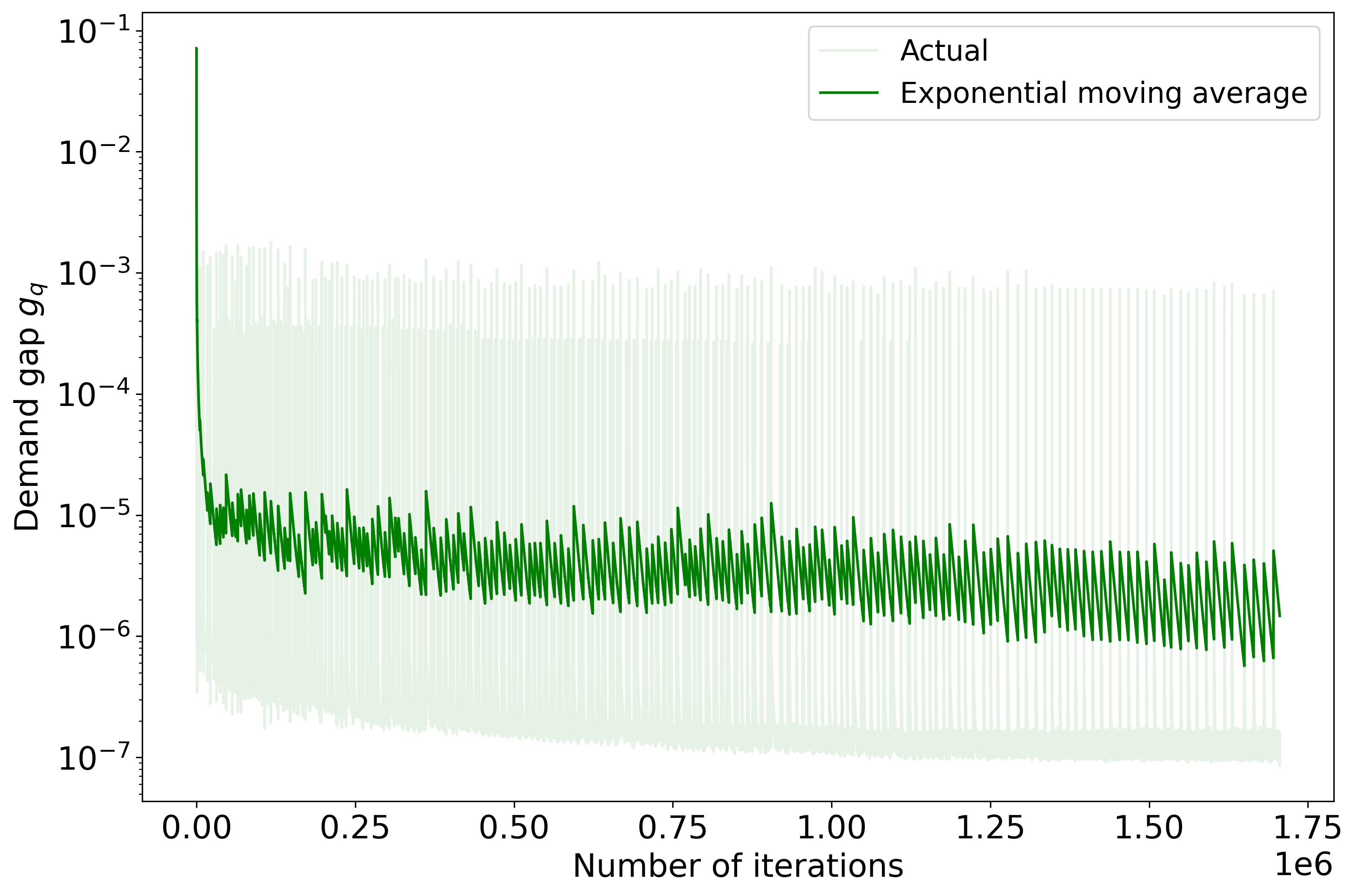}}
\caption{Convergence of demand gap.}
\label{fig:outer_convergence}
\end{figure}

\begin{figure}[H]
\captionsetup[subfigure]{justification=centering}
\centering
{\includegraphics[width=0.8\linewidth]{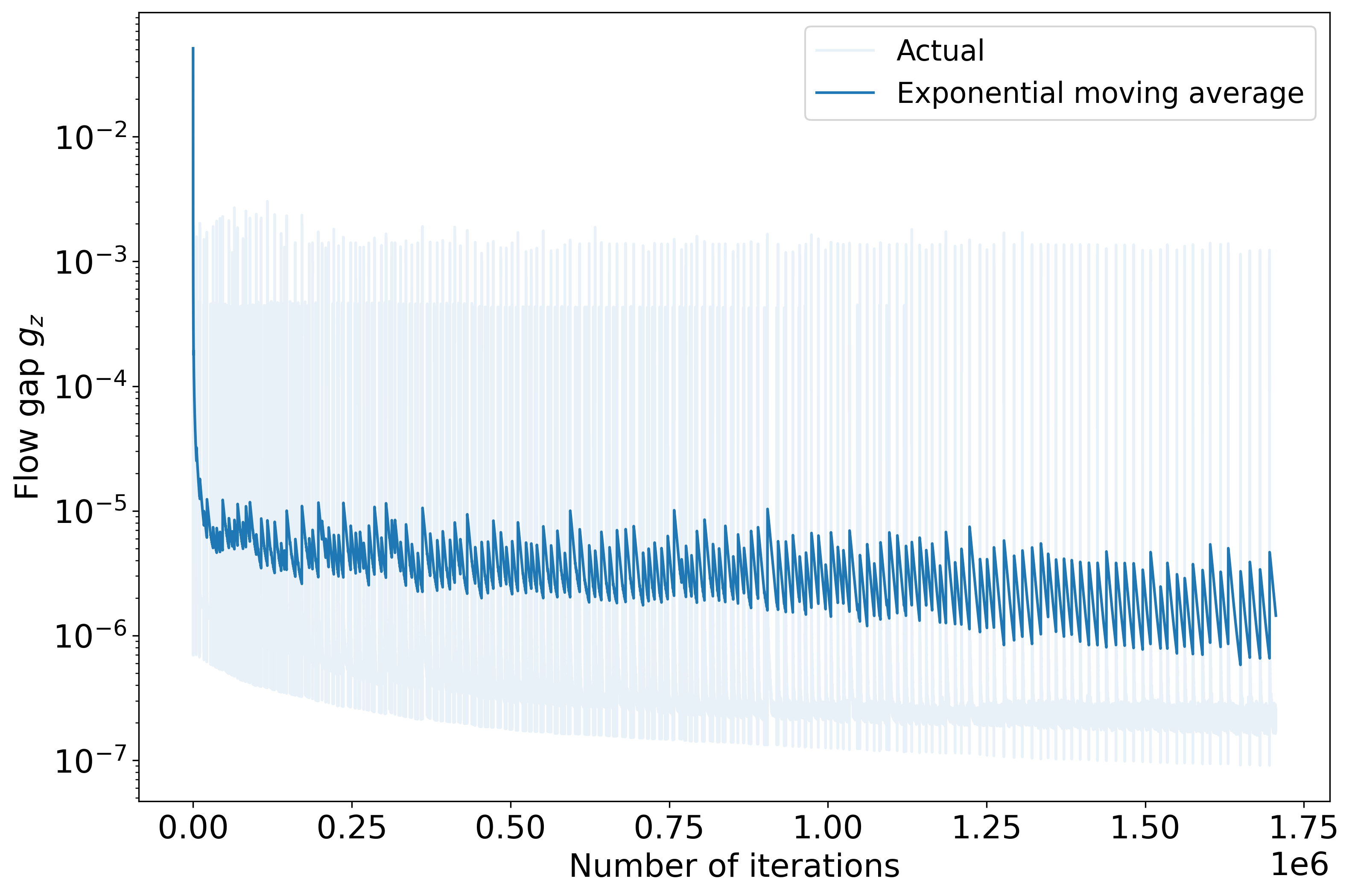}}
\caption{Convergence of flow gap.}
\label{fig:inner_convergence}
\end{figure}

\section{Analysis of compensated OD pairs}\label{sec:compensated_OD_pair}
The compensated MaaS trips under the optimal pricing scheme are between the following pairs of nodes (bidirectional):
\begin{itemize}
    \item (1, 3), (3, 12), (7, 8), (12, 13)
\end{itemize}
As can be seen in Figure~\ref{fig:SF_MaaS_net}, all these OD pairs are connected by a single road link and have no convenient access through MT. Under the optimal assignment, all MaaS travelers take MoD trips while non-MaaS travelers drive alone, both via the shortest paths. Therefore, there is no extra traffic induced on the road network when switching these travelers between MaaS and non-MaaS. However, the MaaS platform chooses to capture and even compensate these travelers because their shifts in mode benefit the entire system. This is done by squeezing the MoD vacant vehicle time (see Eq.~\eqref{eq:MoD_cap}), which lowers the MoD service quality and in turn, encourages the MT ridership.

\end{document}